\newtheorem{theorem}{Theorem}[section]
\newtheorem{definition}[theorem]{Definition}
\DeclareMathAlphabet{\mathpzc}{OT1}{pzc}{m}{it}
\tikzset{
	path/.style={dotted},
	every edge/.style={draw,solid},
	normal/.style={solid},
}
\newcommand*{\optl}{OPTL}
\newcommand*{\nwtl}{NWTL}
\newcommand*{\ltl}{LTL}
\newcommand*{\lnext}{\ocircle}
\newcommand*{\lanext}{\ocircle_{\chi}}
\newcommand*{\lback}{\circleddash}
\newcommand*{\laback}{\circleddash_{\chi}}
\newcommand*{\lguntil}[4]{#3 \mathbin{\mathcal{U}^{#1}_{#2}} #4}
\newcommand*{\luntil}[3]{#2 \mathbin{\mathcal{U}^{#1}} #3}
\newcommand*{\lluntil}[2]{\luntil{}{#1}{#2}}
\newcommand*{\lsince}[3]{#2 \mathbin{\mathcal{S}^{#1}} #3}
\newcommand*{\llsince}[2]{\lsince{}{#1}{#2}}
\newcommand*{\lhyuntil}[2]{\luntil{\uparrow}{#1}{#2}}
\newcommand*{\lhysince}[2]{\lsince{\downarrow}{#1}{#2}}
\newcommand*{\lhtuntil}[2]{\luntil{\downarrow}{#1}{#2}}
\newcommand*{\lhtsince}[2]{\lsince{\uparrow}{#1}{#2}}
\newcommand*{\llglob}{\Box}
\newcommand*{\lapnext}{\lanext^\mathrm{s}}
\newcommand*{\laknext}{\lanext^\mathrm{p}}
\newcommand*{\laenext}{\lanext^\mathrm{end}}
\newcommand*{\laoenext}{\lanext^{\omega \mathrm{end}}}
\newcommand*{\lhypuntil}[2]{\lguntil{\uparrow}{\mathrm{s}}{#1}{#2}}
\newcommand*{\lhyeuntil}[2]{\lguntil{\uparrow}{\mathrm{end}}{#1}{#2}}
\newcommand*{\chain}{\chi}
\newcommand*{\fchain}{\overrightarrow{\chi}}
\newcommand*{\bchain}{\overleftarrow{\chi}}
\newcommand*{\powset}[1]{{\mathscr{P}(#1)}}
\newcommand*{\munw}{\mu}
\newcommand*{\clos}[1]{\operatorname{Cl}({#1})}
\newcommand*{\atoms}[1]{\operatorname{Atoms}({#1})}
\newcommand*{\bfsym}[1]{\boldsymbol{#1}}
\newcommand{\mrk}[1]{{#1}'}
\newcommand{\oldstack}[3]{%
{\ifthenelse{\equal{#1}{1}}{%
\mrk{#2}
}%
{#2}}_{#3}%
}
\newcommand{\stack}[3]{%
[%
{\ifthenelse{\equal{#1}{1}}{%
\mrk{#2}
}%
{#2}}
{\ifthenelse{\equal{#1}{0}}{\ }{} }
{#3}%
]%
}
\newcommand{\tstack}[2]{%
[#1,\ #2]%
}
\newcommand{\tconfig}[3]{\langle #1, \ #2, \ #3 \rangle}
\newcommand{\transition}[1]{\stackrel {{#1}} \vdash}
\newcommand{\va}[1]{\stackrel{#1}{\longrightarrow}}
\newcommand{\ourpath}[1]{\stackrel{#1}{\leadsto}}
\newcommand{\flush}[1]{\stackrel{#1}{\Longrightarrow}}
\newcommand{\ochain}[3]{{}^{#1}\!\left[ #2 \right]\!{}^{#3}}
\newcommand{\symb}[1]{\mathop{smb}(#1)}
\newcommand{\state}[1]{\mathop{st}(#1)}
\title{Temporal Logic and Model Checking for Operator Precedence Languages\footnote{Work partially supported by project AUTOVAM, funded by Fondazione Cariplo and Regione Lombardia.}
}
\author{
  Michele Chiari \quad Dino Mandrioli \quad Matteo Pradella\footnote{Also with IEIIT, Consiglio Nazionale delle Ricerche.}
  \institute{DEIB, Politecnico di Milano \\ P.zza L. Da Vinci, 32, 20133, Milano, Italy}
  \email{michele.chiari@mail.polimi.it  dino.mandrioli@polimi.it matteo.pradella@polimi.it}
}
\begin{document}
\maketitle
\begin{abstract}
In the last decades much research effort has been devoted to extending the success of model checking from the traditional field of finite state machines and various versions of temporal logics to suitable subclasses of context-free languages and appropriate extensions of temporal logics. To the best of our knowledge such attempts only covered \emph{structured languages}, i.e. languages whose structure is immediately ``visible'' in their sentences, such as tree-languages or visibly pushdown ones. In this paper we present a new temporal logic suitable to express and automatically verify properties of \emph{operator precedence languages}. This ``historical'' language family has been recently proved to enjoy fundamental algebraic and logic properties that make it suitable for model checking applications yet breaking the barrier of visible-structure languages (in fact the original motivation of its inventor Floyd was just to support efficient \emph{parsing}, i.e. building the ``hidden syntax tree'' of language sentences). We prove that our logic is at least as expressive as analogous logics defined for visible pushdown languages yet covering a much more powerful family; we design a procedure that, given a formula in our logic builds an automaton recognizing the sentences satisfying the formula, whose size is at most exponential in the length of the formula. 

\medskip
\noindent {\bf Keywords:}
Operator Precedence Languages,
Visibly Pushdown Languages,
Input Driven Languages,
Linear Temporal Logic,
Model Checking.
\end{abstract}

\section{Introduction}
Since the pioneering works by Floyd, Hoare, McNaughton, B\"{u}chi and many others, the investigation of the relation between formal language and automata theory and mathematical logic has been an exciting and productive research field, whose main perspective and goal was the \emph{formal correctness verification}, i.e., a mathematical proof that a given design, formalized as a suitable abstract machine, guarantees system requirements, formalized in terms of mathematical logic formulas. Whereas the early work by Floyd, Hoare, Dijkstra and others pursued the full generality of Turing complete computational formalisms, such as normal programming languages, and consequently made the verification problem undecidable and dependent on human inspection and skill, the independent approach by B\"{u}chi, McNaughton and others focused on the restricted but practically quite relevant families of finite state machines (FSMs) on the one side and of \emph{monadic logics} on the corresponding side. The main achievements on this respect have been the characterization of regular languages --those recognized by FSMs-- in terms of monadic second order logic (MSO)~\cite{bib:Buchi1960a} and the definition of an incredible number of subfamilies that are all equivalent between each other and are characterized in terms of monadic first-order logic (MFO)~\cite{McNaughtonPapert71}.

Such foundational results, however, remained of essentially theoretic interest because the formal correctness problem, though decidable, remains of intractable complexity for MSO and MFO logics (see e.g., \cite{DBLP:FrickG04}).
The state of the art, however, had a dramatic breakthrough with the advent of \emph{temporal logic and model checking} \cite{Emerson90}: for now classic logics such as \emph{linear time temporal logic} (LTL), CTL* and others,
the 
model checking problem, though PSPACE complete, has a time complexity bounded by ``only'' a singly exponential function of formula's length \footnote{The parallel field of correctness verification for Turing complete formalisms, instead, ignited many efforts on general purpose ``semiautomatic" theorem proving.}.

Not surprisingly, the success of model checking based on FSMs and several extensions thereof, e.g., their timed version \cite{AlurDill1994a}, generated the wish of extending it to the case of context-free languages (CFLs) to serve much larger application fields such as general purpose programming languages, large web data based on XML, HTML,  etc. Whereas the case of the full CFL family is made difficult if not impossible due to the lack of fundamental decidability and closure properties, some early results have been obtained in the context of \emph{structured CFLs}: with this term we mean languages whose typical tree-shaped structure is immediately visible in their sentences; the first instance of such language families are parenthesis languages \cite{McNaughton67}, which correspond to regular tree-languages \cite{Tha67}; among various extensions thereof special attention have received \emph{input-driven} (ID) \cite{Input-driven}, alias \emph{visibly push-down} languages (VPLs) \cite{jacm/AlurM09}. Thanks to the fact that they enjoy many of the fundamental closure properties of regular languages, VPLs too have been characterized in terms of a suitable MSO logic \cite{jacm/AlurM09}; early attempts have also been done to support their model checking by exploiting suitable extensions of temporal logics --whether linear time \cite{lmcs/AlurABEIL08} or branching time \cite{DBLP:journals/toplas/AlurCM11}. It is also worth mentioning some parallel results on model checking state machines whose nondeterministic computations have a typical tree-shaped structure by means of a variant of alternation-free modal mu-calculus (a branching time logic) \cite{BurkartSteffen1992a}, whose relationship with pushdown games is discussed in \cite{Walukiewicz2001}, and with both linear and branching time logic in \cite{BouajjaniEM97}.

In this paper we address the same problem in the context of a much larger subfamily of CFL, i.e., \emph{operator precedence languages} (OPLs). OPLs have been invented by R. Floyd to support efficient deterministic parsing of programming languages \cite{Floyd1963}; subsequently, we showed that they enjoy many of the algebraic properties of structured CFLs \cite{Crespi-ReghizziMM1978}; after several decades we resumed the investigation of this family by envisioning their application to various practical state of the art problems, noticeably automatic verification and model checking.\footnote{We also devised efficient parsers for OPLs by exploiting parallelism \cite{BarenghiEtAl2015}, but this issue is not the object of the present paper.} We have shown that OPLs strongly generalize VPLs \cite{CrespiMandrioli12} not only in terms of strict set theoretic inclusion but in that they allow to describe typical programming language constructs such as traditional arithmetic expressions whose structure is not immediately ``visible'' in their sentences, unless one does not take into account the \emph{implicit precedence} of, e.g., multiplicative operators over additive ones. We have built an MSO characterization of OPLs \cite{LonatiEtAl2015} by extending in a non-trivial way the key relation between ``matching string positions'' introduced in \cite{Lautemann94} and exploited in \cite{jacm/AlurM09} for the logical characterization of VPLs. All in all OPLs appear to enjoy many if not all of the pleasant algebraic and logic properties of structured CFLs   but widen their application field in a dramatic way. For a more comprehensive description of OPL properties and their relations with other CFL subfamilies see \cite{MP18}.

Here we move one further step in the path toward building model checking algorithms for OPLs --and the wide application field that they can support-- of comparable efficiency with other state of the art tools based on the FSM formalism. After resuming the basic background to make the paper self-contained (Section~\ref{sec:background}), in Section~\ref{sec:OPTL} we introduce our \emph{operator precedence temporal logic} (OPTL) which is inspired by temporal logics for nested words, in particular, the logic NWTL\cite{lmcs/AlurABEIL08} but requires many more technicalities due to the lack of the ``matching relation'' \cite{Lautemann94} typical of parenthesis languages; we formally define OPTL syntax and semantics and provide examples of its usage and its generality. In Section~\ref{sec:OPTL vs NWTL} we show that OPTL defines a larger language family than \cite{lmcs/AlurABEIL08}'s NWTL: 
every NWTL formula can be automatically translated in linear time into an equivalent OPTL formula; strict inclusion follows from the language family inclusion; again, we emphasize that such an inclusion is not just a set theoretical property but shows a much wider application field. Then, in Section~\ref{sec:model checking} we provide the theoretical basis for model checking OP automata (OPAs) against OPTL formulas, i.e., an algorithm which, for any given OPTL formula of length $n$, builds an equivalent nondeterministic OPA of size $2^n$, i.e., the same size of analogous constructions for less powerful automata and less expressive logics.
For the sake of brevity in this paper we focus mainly on finite length languages; at the end of the section, however, we provide a few hints showing how our model checking  OPAs can be extended to deal with $\omega$-languages.
Section~\ref{sec:conclusion} concludes and envisages several further research steps.

\section{Operator Precedence Languages and Automata} \label{sec:background}

Operator Precedence languages are normally defined through their generating grammars
\cite{Floyd1963}; in this paper, however, we
characterize them through their accepting automata
\cite{LonatiEtAl2015} which are the natural way to state equivalence
properties with logic characterization. We assume some
familiarity with classical language theory concepts such as
context-free grammar, parsing, shift-reduce algorithm, syntax tree~\cite{GruneJacobs:08}.

Let $\Sigma = \{a_1, \dots, a_n \}$ be an alphabet. The empty string is denoted
$\epsilon$. 
We use a special symbol \# not in $\Sigma$ to mark the beginning and
the end of any string. This is consistent with the operator
parsing technique, which requires the look-back and look-ahead of one
character to determine the next action \cite{GruneJacobs:08}.

\begin{definition}\label{def:opm}
  An \textit{operator precedence matrix} (OPM) $M$ over an alphabet
  $\Sigma$ is a partial function $(\Sigma \cup \{\#\})^2 \to \{\lessdot,
  \doteq, \gtrdot\}$, that with each ordered pair $(a,b)$
  associates the OP relation $M_{a,b}$ holding between $a$ and $b$;  if the function is total we say that M is \emph{complete}.  We
  call the pair $(\Sigma, M)$ an \emph{operator precedence alphabet}.  Relations $\lessdot, \doteq, \gtrdot$, are respectively named 
  \emph{yields precedence, equal in precedence}, and \emph{takes precedence}.
By convention, the initial \# can only yield precedence, and other
symbols can only take precedence on the ending \#.
If $M_{a,b} = \circ$, where $\circ \in \{\lessdot, \doteq, \gtrdot \}$,
we write $a \circ b$.  For $u,v \in \Sigma^+$ we write $u \circ v$ if
$u = xa$ and $v = by$ with $a \circ b$.
\end{definition}
%


\begin{definition}\label{def:OPA}
An  \emph{operator precedence automaton (OPA)} is a tuple
$\mathcal A = (\Sigma, M, Q, I, F, \delta) $ where:
\begin{itemize}
\item $(\Sigma, M)$ is an operator precedence alphabet,
\item $Q$ is a set of states (disjoint from $\Sigma$),
\item $I \subseteq Q$ is the set of initial states,
\item $F \subseteq Q$ is the set of final states,
\item $\delta \subseteq Q \times ( \Sigma \cup Q) \times Q$ is the transition relation, which is the union of three disjoint relations:
\[
\delta_{\text{shift}}\subseteq Q \times \Sigma \times Q,
\quad 
\delta_{\text{push}}\subseteq Q \times \Sigma \times Q,
\quad 
\delta_{\text{pop}}\subseteq Q \times Q \times Q.
\]
\end{itemize}
An OPA is deterministic iff
$I$ is a singleton, and
all three components of $\delta$ are --possibly partial-- functions:
$\delta_{\text{shift}}: Q \times \Sigma \to Q,
\  
\delta_{\text{push}}: Q \times \Sigma \to Q,
\
\delta_{\text{pop}}: Q \times Q \to Q.$

\end{definition}

To define the semantics of the automaton, we need some new notations.
We use letters $p, q, p_i, q_i, \dots $ to denote states in $Q$.
We will sometimes use $q_0 \va{a}{q_1}$ for $(q_0, a, q_1) \in  \delta_{\text{shift}} \cup \delta_{\text{push}}$,
 $q_0 \flush{q_2}{q_1}$  for $(q_0, q_2, q_1) \in  \delta_{\text{pop}}$,
and ${q_0} \ourpath{w} {q_1}$, if the automaton can read $w \in \Sigma^*$ going from $q_0$ to $q_1$.
Let  $\Gamma$ be	$\Sigma \times Q$ and let $\Gamma' = \Gamma \cup  \{\bot\} $ be the \textit{stack alphabet}; 
we denote symbols in $\Gamma'$ as $\tstack aq$ or $\bot$.
We set $\symb {\tstack aq} = a$, $\symb {\bot}=\#$, and
$\state {\tstack aq} = q$.
Given a stack content $\Pi =  \pi_n \dots  \pi_2  \pi_1 \bot$, with $ \pi_i \in \Gamma$ , $n \geq 0$, 
we set $\symb \Pi = \symb{\pi_n}$ if $n \geq 1$, $\symb \Pi = \#$ if $n = 0$.

A \emph{configuration} of an OPA is a triple $c = \tconfig w q \Pi$,
where $w \in \Sigma^*\#$, $q \in Q$, and $\Pi \in \Gamma^*\bot$.

A \emph{computation} or \emph{run} of the automaton is a finite sequence
$c_0 \transition{} c_1 \transition{} \dots \transition{} c_n$
of \emph{moves} or \emph{transitions} 
$c_i \transition{} c_{i+1}$; 
there are three kinds of moves, depending on the precedence relation between the symbol on top of the stack and the next symbol to read:

\smallskip
\noindent {\bf push move:} if $\symb \Pi \lessdot \ a$ then
$
\tconfig {ax} p  \Pi\transition{} \tconfig {x} q {\tstack   a p\Pi }$, with $(p,a, q) \in \delta_{\text{push}}$;

\smallskip
\noindent {\bf shift move:} if $a \doteq b$ then 
$
\tconfig {bx} q { \tstack a p \Pi}  \transition{} \tconfig x  r { \tstack b p \Pi}$, with $(q,b,r) \in \delta_{\text{shift}}$;

\smallskip

\noindent {\bf pop move:} if $a \gtrdot b$
then 
$
\tconfig {bx} q  { \tstack a p \Pi}\transition{} \tconfig {bx} r \Pi $, with $(q, p, r) \in \delta_{\text{pop}}$.

Observe that shift and pop moves are never performed when the stack contains only $\bot$.

Push and shift moves update the current state of the automaton according to the transition relations $\delta_{\text{push}}$ and  $\delta_{\text{shift}}$, respectively: push moves put a new element on top of the stack consisting of the input symbol together with the current state of the automaton, whereas shift moves update the top element of the stack by \textit{changing its input symbol only}.
Pop moves remove the element on top of the stack,
and update the state of the automaton according to $\delta_{\text{pop}}$ on the basis of the pair of states consisting of the current state of the automaton and the state of the removed stack symbol;
pop moves do not consume the input symbol, which is used only to establish the $\gtrdot$ relation, remaining available for the next move.

\noindent The automaton accepts the language
$
L(\mathcal A) = \left\{ x \in \Sigma^* \mid  \tconfig {x\#} {q_I} {\bot} \vdash ^* 
\tconfig {\#} {q_F}{\bot} , \allowbreak q_I \in I, \allowbreak q_F \in F \right\}.
$

\begin{definition}\label{def:chain}
A \emph{simple chain} is a string $c_0 c_1 c_2 \dots c_\ell c_{\ell+1}$,
written as
$
\ochain {c_0} {c_1 c_2 \dots c_\ell} {c_{\ell+1}},
$
such that:
$c_0, \allowbreak c_{\ell+1} \in \Sigma \cup \{\#\}$,
$c_i \in \Sigma$ for every $i = 1,2, \dots \ell$ ($\ell \geq 1$),
and $c_0 \lessdot c_1 \doteq c_2 \dots c_{\ell-1} \doteq c_\ell
\gtrdot c_{\ell+1}$.

A \emph{composed chain} is a string 
$c_0 s_0 c_1 s_1 c_2  \dots c_\ell s_\ell c_{\ell+1}$, 
where
$\ochain {c_0}{c_1 c_2 \dots c_\ell}{c_{\ell+1}}$ is a simple chain, and
$s_i \in \Sigma^*$ is the empty string 
or is such that $\ochain {c_i} {s_i} {c_{i+1}}$ is a chain (simple or composed),
for every $i = 0,1, \dots, \ell$ ($\ell \geq 1$). 
Such a composed chain will be written as
$\ochain {c_0} {s_0 c_1 s_1 c_2 \dots c_\ell s_\ell} {c_{\ell+1}}$.

The pair made of the first and the last symbols of a chain is called its {\em context}.
\end{definition}

\begin{definition}
  A finite word $w$ over $\Sigma$ is \emph{compatible} with an OPM $M$ iff
  for each pair of letters $c, d$, consecutive in $w$, $M_{cd}$ is defined and,
  for each substring $x$ of $\# w \#$ which is a chain of the form $^a[y]^b$,
  $M_{a b}$ is defined.
\end{definition}

E.g., the word 
$\mathbf{call} \  
 \mathbf{handle} \ 
 \mathbf{call} \ 
 \mathbf{call} \ 
 \mathbf{call} \ 
 \mathbf{throw} \ 
 \mathbf{throw} \ 
 \mathbf{throw} \ 
 \mathbf{ret}$ 
of Figure~\ref{fig:op-matrix-mcall} is compatible with $M_\mathbf{call}$. 
In the same figure all the resulting chains are reported, e.g. 
$\ochain {\mathbf{call}} {\mathbf{call}} {\mathbf{throw}}$,
$\ochain {\mathbf{handle}} {\mathbf{throw}} {\mathbf{ret}}$ 
are simple chains,
while 
$\ochain {\mathbf{call}} {\mathbf{call} [ \mathbf{call} ] }  {\mathbf{throw}}$, 
$\ochain {\mathbf{handle}} {[ [ [ \mathbf{call} [ \mathbf{call} [ \mathbf{call} ] ] ] \mathbf{throw} ] \mathbf{throw} ] \mathbf{throw} } {\mathbf{ret}}$
are composed chains.

\begin{definition}
Let $\mathcal A$ be an OPA.
We call a \emph{support} for the simple chain
$\ochain {c_0} {c_1 c_2 \dots c_\ell} {c_{\ell+1}}$
any path in $\mathcal A$ of the form
$q_0
\va{c_1}{q_1}
\va{}{}
\dots
\va{}q_{\ell-1}
\va{c_{\ell}}{q_\ell}
\flush{q_0} {q_{\ell+1}}$, where the arrow labeled $c_1$ corresponds to a push move whereas the remaining ones denote shift moves. 
The label of the last (and only)  pop is exactly $q_0$, i.e. the first state of the path; this pop is executed because of relation $c_\ell \gtrdot c_{\ell+1}$.

\noindent We call a \emph{support for the composed chain} 
$\ochain {c_0} {s_0 c_1 s_1 c_2 \dots c_\ell s_\ell} {c_{\ell+1}}$
any path in $\mathcal A$ of the form
\begin{equation}
\label{eq:compchain}
q_0
\ourpath{s_0}{q'_0}
\va{c_1}{q_1}
\ourpath{s_1}{q'_1}
\va{c_2}{}
\dots
\va{c_\ell} {q_\ell}
\ourpath{s_\ell}{q'_\ell}
\flush{q'_0}{q_{\ell+1}}
\end{equation}
where, for every $i = 0, 1, \dots, \ell$: 
if $s_i \neq \epsilon$, then $q_i \ourpath{s_i}{q'_i} $ 
is a support for the chain $\ochain {c_i} {s_i} {c_{i+1}}$, else $q'_i = q_i$.

\end{definition}

The chains fully determine the structure of the parsing of any
automaton over $(\Sigma, M)$. If the automaton performs the computation
$
\langle sb, q_i, [a, q_j] \Pi \rangle \vdash^*
\langle b,  q_k, \Pi \rangle
$
then $\ochain asb$ 
is necessarily a chain over $(\Sigma, \allowbreak M)$ and there exists a support
like \eqref{eq:compchain} with $s = s_0 c_1 \dots c_\ell s_\ell$ and $q_{\ell+1} = q_k$.
The above computation corresponds to the parsing by the
automaton of the string $s_0 c_1 \dots c_\ell s_{\ell}$ within the
context $a$,$b$. Such context contains all
information needed to build the subtree whose frontier is that string.
This is a distinguishing feature of OP languages: we call it the \emph{locality principle}.

\begin{definition}\label{def:maxfa}
Given $(\Sigma, M)$,  let us consider the OPA
$\mathcal A(\Sigma, M)$ $=$ $\langle \Sigma, M,$ $\{q\}, \{q\}, \{q\}, \delta_{max}
\rangle $ where  $\delta_{max}(q,q) = q$, and $\delta_{max}(q,c) = q$,
$\forall c \in \Sigma$.
We call $\mathcal A(\Sigma, M)$ the \emph{OP Max-Automaton} over $\Sigma, M$.
\end{definition}

For a max-automaton $\mathcal A(\Sigma,M)$ each chain has a support;
since there is a chain $\ochain{\#}{s}{\#}$ for any string $s$
compatible with $M$, a string is accepted by $\mathcal A(\Sigma, M)$
iff it is compatible with $M$.  Also, whenever $M$ is complete, each
string is compatible with $M$, hence accepted by the max-automaton; 
thus, when M
is complete the max-automaton defines the
universal language $\Sigma^*$ by assigning to any string the (unique)
structure compatible with the OPM.
Considering $M_{\mathbf{call}}$ of Figure~\ref{fig:op-matrix-mcall},
if we take e.g. the string  
$
\mathbf{ret} \
\mathbf{call} \  
\mathbf{handle}
$, it is accepted by the max-automaton and its structure is 
$
\#[[\mathbf{ret}]
\mathbf{call}
[\mathbf{handle}]]
\#.
$

\begin{figure}
\begin{subfigure}{0.3\linewidth}
\centering
\footnotesize{
\[
\begin{array}{r | c c c c}
                & \mathbf{call} & \mathbf{ret} & \mathbf{handle} & \mathbf{throw} \\
\hline
\mathbf{call}   & \lessdot      & \doteq       & \lessdot        & \gtrdot        \\
\mathbf{ret}    & \gtrdot       & \gtrdot      & \lessdot        & \gtrdot        \\
\mathbf{handle} & \lessdot      & \gtrdot      & \lessdot        & \lessdot       \\
\mathbf{throw}  & \gtrdot       & \gtrdot      & \gtrdot         & \gtrdot        \\
\end{array}
\]}
\end{subfigure}
\begin{subfigure}{0.8\linewidth}
\[
\# [ \mathbf{call} [ \mathbf{handle} 
[ [ [ [ \mathbf{call} [ \mathbf{call} [ \mathbf{call} ] ] ] 
\mathbf{throw} ] \mathbf{throw} ] \mathbf{throw} ] ] \mathbf{ret} ] \#
\]
\end{subfigure}
\caption{OPM $M_\mathbf{call}$ and an example string with all the chains evidenced by brackets.}
\label{fig:op-matrix-mcall}
\end{figure}

In conclusion, given an OP alphabet, the OPM $M$ assigns a structure
to any compatible string in $\Sigma^*$; unlike parentheses languages such a structure is 
not visible in the string, and must be built by means of a non-trivial parsing 
algorithm.  An OPA defined on the
OP alphabet selects an appropriate subset within the
``universe'' of strings compatible with $M$. In some sense this property is yet 
another variation of
the fundamental Chomsky-Sh\"utzenberger theorem.
All above definitions are extended to the case of infinite strings in the traditional way~\cite{LonatiEtAl2015}; for this reason and for the sake of brevity we will deal with the $\omega$-case only in Section \ref{MC infinite} where the application of novel techniques is necessary.
For a more complete description of the OPL family and of its relations with other CFLs we refer the reader to \cite{MP18}.

\section{Operator Precedence Temporal Logic} \label{sec:OPTL}

Languages recognized by different OPAs on a given OP alphabet form a Boolean algebra
\cite{Crespi-ReghizziMM1978}, i.e. they are closed under union, intersection and complement:
these properties allow us to define \optl{}
(\emph{Operator Precedence Temporal Logic}).
Given an OP alphabet, each well-formed \optl{} formula characterizes a subset of the
universal language based on that alphabet.
Due to the closure properties above, for each \optl{} formula it is possible to
identify an OPA that recognizes the same language denoted by it,
opening the way for model checking of \optl{}.

Next, we present the syntax of \optl{}
explaining its meaning by means of simple examples, then formally define its semantics.
We explore the relationships between the \optl{} operators in Section~\ref{subsec:optl-equivalence}.

\subsection{Syntax and Semantics}
\label{subsec:optl-syntax}

\optl{} is a propositional temporal logic:
the truth of formulas depends on the atomic propositions holding in each word position.
Let $AP$ be the finite set of atomic propositions:
the semantics of \optl{} relies on an OP alphabet based on $\powset{AP}$,
so terminal characters are subsets of $AP$,
and word structure is given by an OPM defined on $\powset{AP}$.
Since defining an OPM on a power set is often uselessly cumbersome,
in this paper we define it on a set $\Sigma \subseteq AP$,
whose elements are called ``structural labels'', and typeset in bold.
The corresponding OPM on $\powset{AP}$ can be derived from the structural label
contained in each subset of $AP$,
leaving the matrix undefined for subsets not containing exactly one element of $\Sigma$. For example, in the word of Figure~\ref{fig:optl-example-word},
pos.~1 is labeled with the set $\{\mathbf{call}, \mathrm{p}_a\}$,
and pos.~3 with $\{\mathbf{call}, \mathrm{p}_b\}$:
they both contain $\mathbf{call}$,
so they are in the $\lessdot$ relation according to matrix $M_\mathbf{call}$
of Figure~\ref{fig:op-matrix-mcall}.

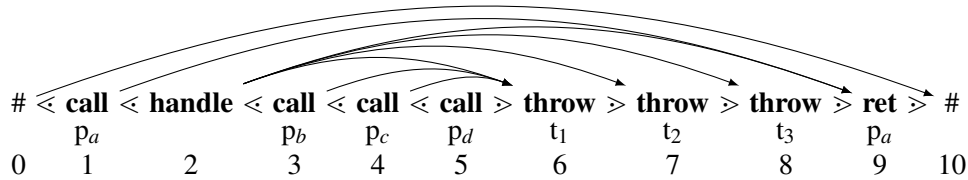
\begin{figure}
\centering
\begin{tikzpicture}
  [edge/.style={->, >=latex}]
\matrix (m) [matrix of math nodes, column sep=-4, row sep=-4]
{
  \bfsym{\#}
  & \lessdot & \mathbf{call}
  & \lessdot & \mathbf{handle}
  & \lessdot & \mathbf{call}
  & \lessdot & \mathbf{call}
  & \lessdot & \mathbf{call}
  & \gtrdot & \mathbf{throw}
  & \gtrdot & \mathbf{throw}
  & \gtrdot & \mathbf{throw}
  & \gtrdot & \mathbf{ret}
  & \gtrdot & \bfsym{\#} \\
  & & \mathrm{p}_a
  & &
  & & \mathrm{p}_b
  & & \mathrm{p}_c
  & & \mathrm{p}_d
  & & \mathrm{t}_1
  & & \mathrm{t}_2
  & & \mathrm{t}_3
  & & \mathrm{p}_a
  & & \\
  0
  & & 1
  & & 2
  & & 3
  & & 4
  & & 5
  & & 6
  & & 7
  & & 8
  & & 9
  & & 10 \\
};
\draw[edge] (m-1-1) to [out=20, in=160] (m-1-21);
\draw[edge] (m-1-3) to [out=20, in=160] (m-1-19);
\draw[edge] (m-1-5) to [out=20, in=160] (m-1-19);
\draw[edge] (m-1-5) to [out=20, in=160] (m-1-17);
\draw[edge] (m-1-5) to [out=20, in=160] (m-1-15);
\draw[edge] (m-1-5) to [out=20, in=160] (m-1-13);
\draw[edge] (m-1-7) to [out=20, in=160] (m-1-13);
\draw[edge] (m-1-9) to [out=20, in=160] (m-1-13);
\end{tikzpicture}
\caption{An example of execution trace, according to $M_\mathbf{call}$
  (Figure~\ref{fig:op-matrix-mcall}).
  Chains are highlighted by arrows joining their context;
  structural labels are typeset in bold,
  while other atomic propositions are shown below them.
  First, procedure $\mathrm{p}_a$ is called (pos.~1),
  and it installs an exception handler in pos.~2.
  Then, three nested procedures are called,
  and the innermost one ($\mathrm{p}_d$) throws a sequence of exceptions,
  which are all caught by the handler.
  Finally, $\mathrm{p}_a$ returns, uninstalling the handler.}
\label{fig:optl-example-word}
\end{figure}

\noindent {\bf Syntax. } The syntax of \optl{} is based on the following grammar,
where $\mathrm{a}$ denotes any symbol in $AP$:
\begin{align*}
\varphi := \;
& \mathrm{a}
\mid \neg \varphi
\mid (\varphi \land \varphi)
\mid \lnext{\varphi}
\mid \lanext{\varphi}
\mid \lback{\varphi}
\mid \laback{\varphi} \mid\\ 
& (\lluntil{\varphi}{\varphi})
\mid (\luntil{\boxdot}{\varphi}{\varphi})
\mid (\llsince{\varphi}{\varphi})
\mid (\lsince{\boxdot}{\varphi}{\varphi}) 
\mid (\luntil{\varobar}{\varphi}{\varphi})
\mid (\lsince{\varobar}{\varphi}{\varphi})
\end{align*}
We informally show the meaning of \optl{} operators
by referring to the word of Figure~\ref{fig:optl-example-word},
w.r.t the OPM of Fig.~\ref{fig:op-matrix-mcall}.
The $\lnext$ and $\lback$ symbols denote the next and back operators from \ltl{},
while the undecorated $\lluntil{}{}$ and $\llsince{}{}$
operators are \ltl{} until and since.
The $\lanext$ and $\laback$ operators,
which we call \emph{matching next} and \emph{matching back},
express properties on string positions in the chain relation
(which will be formally defined later on) with the current one.
For example, formula $\lanext \mathbf{throw}$ is true in positions containing
a $\mathbf{call}$ to a procedure that is terminated by an exception
thrown by an inner procedure, such as 3 and 4 of Fig.~\ref{fig:optl-example-word},
because pos.\ 3 forms a chain with pos.\ 6, in which $\mathbf{throw}$ holds, and so on.
Formula $\laback \mathbf{handle}$ is true in handled $\mathbf{throw}$ positions,
such as 6, 7 and 8, because e.g.\ pos. 2 forms a chain with 6, and $\mathbf{handle}$ holds in 2.
The $\luntil{\boxdot}{}{}$ and $\lsince{\boxdot}{}{}$ operators,
called \emph{operator precedence summary until and since},
are inspired to the homonymous $\luntil{\sigma}{}{}$ and $\lsince{\sigma}{}{}$
operators from \cite{lmcs/AlurABEIL08},
and are path operators that can ``jump'' over chain bodies;
the symbol $\boxdot$ is a placeholder for one or more precedence relations allowed in the path
(e.g. $\luntil{\lessdot \doteq}{}{}$ or $\luntil{\gtrdot \lessdot}{}{}$ and so on).
Formula $\luntil{\gtrdot}{(\mathbf{call} \lor \mathbf{throw})}{\mathbf{ret}}$
is true in pos.\ 3 because there is a path that jumps over the chain between 3 and 6,
and goes on with positions 7, 8 and 9, which are in the $\gtrdot$ relation:
pos.\ 3 and from 6 to 8 satisfy $\mathbf{call} \lor \mathbf{throw}$,
while pos.\ 9 satisfies $\mathbf{ret}$.
In $\luntil{\varobar}{}{}$ and $\lsince{\varobar}{}{}$,
$\varobar$ is a placeholder for $\uparrow$ or $\downarrow$;
these peculiar path operators are called \emph{hierarchical until and since},
and they express properties about the multiple positions
in the chain relation with the current one:
their associated paths can dive up and down between such positions.
For example, $\lhyuntil{\mathbf{throw}}{\mathrm{t}_3}$
and $\lhysince{\mathbf{throw}}{\mathrm{t}_1}$ hold in pos.\ 2,
because there is path 6-7-8 made of ending positions of chains starting in 2,
such that $\mathbf{throw}$ holds until $\mathrm{t}_3$ holds
(or $\mathbf{throw}$ has held since $\mathrm{t}_1$ held).
Formulas $\lhtuntil{\mathbf{call}}{\mathrm{p}_c}$
and $\lhtsince{\mathbf{call}}{\mathrm{p}_b}$ hold in pos.\ 6,
because of path 3-4, made of positions where a chain ending in 6 starts,
and whose labels satisfy the appropriate until and since conditions.

Many relevant properties can be expressed in \optl{}:
formula $\llglob [\mathbf{handle} \implies \lanext \mathbf{ret}]$,
where $\llglob \psi$ is a shortcut for $\neg (\lluntil{\top}{\neg \psi})$,
holds if all exception handlers are properly uninstalled by a return statement.
Formula $\llglob [\mathbf{throw} \implies \neg (\lhtsince{\top}{\mathrm{p}_b})]$
is false if procedure $\mathrm{p}_b$ is terminated by an exception;
and formula $\neg (\lhyuntil{\top}{(\mathbf{throw} \land \lhtuntil{\top}{\mathbf{call}})})$
is true in $\mathbf{handle}$s catching only throw statements not interrupting any procedure.
These properties may not be expressible in \nwtl{},
because its nesting relation is one-to-one, and fails to model situations in which
a single entity is in relation with multiple other entities.
\smallskip

\noindent {\bf Semantics. } The \optl{} semantics is based on the OP word structure
$\langle U, M_\powset{AP}, P \rangle$
where
\begin{itemize}
\item
  $U = \{0, 1, \dots, n, n+1\}$, with $n \in \mathbb{N}$ is a set of word positions;
\item
  $M_\powset{AP}$ is an operator precedence matrix on $\powset{AP}$;
\item
  $P \colon U \to \powset{AP}$ is a function associating each word position in $U$
  with the set of atomic propositions that hold in that position,
  with $P(0) = P(n+1) = \{\bfsym{\#}\}$.
\end{itemize}

The word structure is given by the OPM $M_\powset{AP}$,
which associates a precedence relation to each pair of positions,
based on the subset of $AP$ associated to them by $P$.
For OPTL to be able to denote the universal language $\powset{AP}^*$,
$M_\powset{AP}$ must be complete.
In the following we will denote subsets of $AP$ by lowercase letters in italic,
such as $a \in \powset{AP}$.
For any $i, j \in U$ we write, $i \lessdot j$
if $a = P(i)$, $b = P(j)$ and the relation  $a \lessdot b$ is in $M_\powset{AP}$.

The semantics of \optl{} deeply relies on the concept of \emph{chain},
presented in Section~\ref{sec:background}.
We define the chain relation $\chain \subseteq U \times U$
so that $\chain(i, j)$ holds between two positions $i < j$
if $i$ and $j$ form the context of a chain.
In case of composed chains, this relation is not one-to-one:
there may be positions where multiple chains start or end.
Therefore, we additionally define two one-to-one relations,
helpful in identifying the largest chain starting or ending in a word position.
More formally, the \emph{maximal forward} chain relation is defined so that
$\fchain(i,j) \iff j = \max\{k \in U \mid \chain(i,k) \}$ for any $i, j \in U$,
and the \emph{maximal backward} chain is defined as
$\bchain(i,j) \iff i = \min\{k \in U \mid \chain(k,j) \}$.
The max. forward (resp. backward) chain relation can be undefined
for a pair of positions if either they are the context of no chain,
or if they are the context of a chain which is not forward- (resp. backward-) maximal.

Let $w$ be an OP word, and $\mathrm{a} \in AP$.
Then, for any position $i \in U$ of $w$, we have $(w, i) \models \mathrm{a}$
if $\mathrm{a} \in P(i)$.
Operators such as $\land$ and $\neg$ have the usual semantics from propositional logic,
while $\lnext$ and $\lback$ have the same semantics as in \ltl{}
(i.e. $(w, i) \models \lnext \varphi$ iff $(w, i+1) \models \varphi$,
and similarly for $\lback$).

The $\lanext$ and $\laback$ operators express properties regarding the positions that form a maximal chain that starts (resp. ends) in the current one: $(w,i) \models \lanext{\varphi}$ iff there exists a position
$j \in U$ such that $\fchain(i,j)$ and $(w,j) \models \varphi$;
symmetrically, $(w,i) \models \laback{\varphi}$ iff there exists a position
$j \in U$ such that $\bchain(j,i)$ and $(w,j) \models \varphi$.
In Figure~\ref{fig:optl-example-word},
$(w, 3) \models \lanext \mathbf{throw}$
because $\fchain(3, 6)$ and $(w, 6) \models \mathbf{throw}$;
$(w, 6) \models \laback \mathbf{handle}$ holds because $\bchain(2, 6)$
and $(w, 2) \models \mathbf{handle}$,
but $(w, 6) \not\models \laback \mathrm{p}_b$ because chain $\chain(3,6)$ is not backward-maximal
(although it is forward-maximal).

A \emph{path} of length $n \in \mathbb{N}$ between
$i, j \in U$ is a sequence of positions
$i_1 < i_2 < \dots < i_n$, with $i \leq i_1$ and $i_n \leq j$.
The \emph{until} operator on a set of paths $\Pi$ is defined as follows:
for any word $w$ and position $i \in U$,
and for any two \optl{} formulas $\varphi$ and $\psi$,
$(w, i) \models \luntil{\Pi}{\varphi}{\psi}$
iff there exist a position $j \in U$, $j \geq i$,
and a path $i_1 < i_2 < \dots < i_n$ between $i$ and $j$ in $\Pi$
such that $(w, i_k) \models \varphi$ for any $1 \leq k < n$,
and $(w, i_n) \models \psi$.
The \emph{since} operator is defined symmetrically.
Note that a path from $i$ to $j$ does not necessarily start in $i$ and end in $j$,
but it may do in positions between them.
However, this will only happen with hierarchical paths.
We define the different kinds of until/since operators by associating them with suitable set of paths.

The \emph{linear until} ($\lluntil{\varphi}{\psi}$)
and \emph{since} ($\llsince{\varphi}{\psi}$) operators,
based on linear paths, have the same semantics as in \ltl{}.
A \emph{linear path} starting in position $i \in U$
is such that $i_1 = i$ and for any $1 \leq k < n$ $i_{k+1} = i_k + 1$.

The \emph{OP-summary until} operator exploits the $\fchain$ relation
to express properties on paths that skip chain bodies,
also keeping precedence relations between consecutive word positions into account.
\begin{definition} \label{def:forward-opsp}
Given a set $O \subseteq \{ \lessdot, \doteq, \gtrdot \}$,
the $\luntil{O}{}{}$ operator is based on the class of
\emph{forward OP-summary} paths.
A path of this class between $i$ and $j \in U$
is a sequence of positions $i = i_1 < i_2 < \dots < i_n = j$ such that,
for any $1 \leq k < n$,
\[
i_{k+1} =
\begin{cases}
  h &\text{if $\fchain(i_k,h)$ and $h \leq j$}; \\
  i_{k} + 1 &\text{if $i_k \odot i_k + 1$ with $\odot \in O$, otherwise.}
\end{cases}
\]
\end{definition}

There exists at most one forward OP-summary path between any two positions.
For example, in Figure~\ref{fig:optl-example-word}, if we take $O = \{ \gtrdot \}$ as in
$\luntil{\gtrdot}{(\mathbf{call} \lor \mathbf{throw})}{\mathbf{ret}}$,
the path between 3 and 9 is made of pos.\ 3-6-7-8-9,
because $\fchain(3,6)$ and the body of this chain is skipped, and $6 \gtrdot 7$,
$7 \gtrdot 8$ and $8 \gtrdot 9$.
If we took e.g. $O = \{ \doteq, \lessdot \}$, there would be no such path,
because consecutive positions in the $\gtrdot$ relation are not considered.
With $O = \{ \gtrdot, \lessdot \}$, the path between 2 and 6
does not skip the body of chain $\chain(2,6)$, because it is not forward-maximal:
it is the linear path 2-3-4-5-6.
The \emph{OP-summary since} operator is based on \emph{backward OP-summary} paths,
which are symmetric to their until counterparts,
 relying on the $\bchain$ relation instead of $\fchain$.
For example, $\lsince{\lessdot}{(\mathbf{throw} \lor \mathbf{handle})}{\mathbf{call}}$
holds in pos.~8 because of path 1-2-8, that skips the body of chain $\bchain(2,8)$
and satisfies $\mathbf{throw} \lor \mathbf{handle}$ in 2 and 8,
and $\mathbf{call}$ in 1.
Again, bodies of chains that are not backward-maximal cannot be skipped.

While summary operators are only aware of the maximal chain relations,
hierarchical operators can express properties discriminating between all other chains.
The \emph{hierarchical yield-precedence} until and since operators,
denoted as $\lhyuntil{}{}$ and $\lhysince{}{}$ respectively,
are based on paths made of the ending positions
of non-maximal chains starting in the current position $i \in U$.
One of such paths is a sequence of word positions $i_1 < i_2 < \dots < i_n$, with $i < i_1$,
such that for any $1 \leq k \leq n$ we have $i \lessdot i_k$
and $\chain(i,i_k)$, and, additionally, there is no $i'_k$ that satisfies
these two properties and $i_{k-1} < i'_k < i_k$.
Moreover, for the until operator $i_1$ must be the leftmost position enjoying
the above properties (i.e. there is no $i'_1$ s.t. $i < i'_1 < i_1$ enjoying them),
and for the since operator $i_n$ must be the rightmost.
The latter is a since operator despite being a future modality.
We chose this naming because in formulas such as $\lhysince{\varphi}{\psi}$,
$\psi$ must hold at the beginning of the path, while $\varphi$ must hold in subsequent
positions, which is the typical behavior of since operators.
Note that these paths only contain forward non-maximal chain ends,
which are in the $\lessdot$ relation with $i$.
For example, in pos.~2 $\lhyuntil{\mathbf{throw}}{\mathrm{t}_3}$
holds because of path 6-7-8, since $\mathbf{throw}$ holds in 6-7 and $\mathrm{t}_3$ in 8;
instead, the path made only of pos.~6 and the one made of pos.~6-7,
in which $\mathrm{t}_3$ does not hold, do not satisfy it.
Similarly, $\lhysince{\mathbf{throw}}{\mathrm{t}_1}$
is satisfied by path 6-7-8, but not by the path made of pos.~8
and the one made of pos.~7-8, in which $\mathrm{t}_1$ does not hold.
Position 9 is not included in these paths,
because it does not satisfy the condition $2 \lessdot 9$
(indeed, $2 \gtrdot 9$).

Conversely, \emph{hierarchical take-precedence} until and since operators
($\lhtuntil{}{}$ and $\lhtsince{}{}$)
consider non-maximal chains ending in the current position $j \in U$.
The paths they are based on are sequences of word positions $i_1 < i_2 < \dots < i_n$,
with $i_n < j$, such that for any $1 \leq k \leq n$ we have $i_k \gtrdot j$ and $\chain(i_k,j)$,
and, additionally, there exists no position $i'_k$
that satisfies these two properties and $i_k < i'_k < i_{k+1}$.
For the until operator, $i_1$ must be the leftmost position enjoying these properties,
and for the since operator $i_n$ must be the rightmost
(i.e.\ there is no $i'_n$, $i_n < i'_n < j$, that satisfies them).
Note that $\lhtuntil{}{}$ is an until operator despite being a past modality:
again, the reason is that $\lhtuntil{\varphi}{\psi}$ enforces $\psi$ at the end
of the path, and $\varphi$ in previous positions, making it more similar to an until operator.
In pos.~6, $\lhtuntil{\mathbf{call}}{\mathrm{p}_c}$
is satisfied by path 3-4 and not by the one made only of pos.~3,
because $\mathrm{p}_c$ does not hold in 3, but it does in 4,
and $\mathbf{call}$ holds in 3.
Formula $\lhtsince{\mathbf{call}}{\mathrm{p}_b}$
is satisfied by path 3-4, and not by the one made only of 4,
because $\mathrm{p}_b$ holds in 3 and $\mathbf{call}$ in 4.

\subsection{Equivalence between Operators}
\label{subsec:optl-equivalence}

In this section, we show that hierarchical operators do not contribute to the expressiveness
of OPTL, because OP-summary until operators can be used in place of them,
at the expense of an exponential blowup in formula length.
First, we define two auxiliary formulas:
\[
\sigma_a \equiv
\bigg( \bigwedge_{\ell \in a} \ell \bigg)
\land
\bigg( \bigwedge_{\ell \in (AP \setminus a)} \neg \ell \bigg),
\qquad
\xi_{a \lessdot} \equiv
\bigvee_{b \in \powset{AP} \mid a \lessdot b} \sigma_b.
\]
For any $a \in \powset{AP}$, $\sigma_a$ holds only in positions labeled with $a$.
$\xi_{a \lessdot}$ holds only in positions labeled with a set $b \in \powset{AP}$
such that $a \lessdot b$, and allows to express OP relations in \optl{} formulas.
Formula $\xi_{\gtrdot a}$ is defined similarly.
The yield-precedence hierarchical until operator $\lhyuntil{}{}$ can be translated as follows:
\[
\lhyuntil{\varphi}{\psi} \equiv
\bigvee_{a \in \powset{AP}}
  \Big[
    \sigma_a
    \land
    \lnext \Big(
    \luntil{\gtrdot \doteq}
           {\big(
             \laback \sigma_a \implies \varphi
            \big)}
           {\big(
             \laback \sigma_a
             \land
             \xi_{a \lessdot}
             \land
             \psi
            \big)}
    \Big)
  \Big].
\]
Let us say the formula is evaluated in position $i \in U$, labeled with $a \in \powset{AP}$.
The $\lhyuntil{}{}$ operator considers paths made of positions $k$ such that $\chain(i, k)$
and $i \lessdot k$, i.e. ends of non-maximal chains starting in $i$.
$\varphi$ must hold in all positions of a path except the last one, $j$, in which $\psi$ must hold.
This translation works by letting an OP-summary path run from $i + 1$ to a position $h = j$.
Formula $\laback \sigma_a$ makes sure $h$ is the end of a chain starting in a position
labeled with $a$, which must be $i$.
Moreover, $\xi_{a \lessdot}$ implies $i \lessdot h$,
and $\psi$ must hold in $h$. Therefore, we have $h = j$.
In all positions $i < k < j$ such that $\chain(i, k)$, $\varphi$ must hold:
this is enforced similarly by subformula $\laback \sigma_a \implies \varphi$.
The fact that $\chain(i, j)$ implies $i \lessdot k$ for all positions $k$ above,
so there is no need to put $\xi_{a \lessdot}$ in the left side of the until operator.
Since the OP relations between two positions depend on the subsets of $AP$ labeling them,
the disjunction of separate instances of the formula for each $a \in \powset{AP}$ are needed.

Formulas translating other hierarchical operators are analogous.
Notice that the size of the translation above is exponential in the length of the initial formula.
In fact, let $n = |AP|$:
$|\sigma_a| = \Theta(n)$ and $|\xi_{a \lessdot}| = O(n 2^{n})$.
If $\beta$ is the translation of $\alpha = \lhyuntil{\varphi}{\psi}$,
then in the worst-case scenario we have
 $|\beta| = O((2^n)^{|\alpha|} |\xi_{a \lessdot}|) = O(n 2^{2 n |\alpha|})$.
 
The above result allows us to state that the set of operators
\(
\{
  \neg, \allowbreak
  \land, \allowbreak
  \lnext, \allowbreak
  \lback, \allowbreak
  \lanext, \allowbreak
  \laback, \allowbreak
  \lluntil{}{}, \allowbreak
  \llsince{}{}, \allowbreak
  \luntil{\boxdot}{}{},
  \lsince{\boxdot}{}{}
\}
\)
is adequate for \optl{}.
We conjecture that a result similar to the above could be obtained for the linear until
and since operators as well, allowing us to omit them from this set.

\section{Relationship with Nested Words}
\label{sec:OPTL vs NWTL}

We now explore the relationship between \optl{} and the \nwtl{} logic
presented in \cite{lmcs/AlurABEIL08}.
\nwtl{} is a temporal logic based on the VPL family, which is strictly contained in OPL.
In \cite{CrespiMandrioli12, LonatiEtAl2015, MP18} the relations between the two families are discussed in depth both from a mathematical and an application point of view: building  \optl{} formulas describing OPLs that are not VPLs is a trivial job.
This proves that there exist languages not expressible in \nwtl{} that
can be expressed in \optl{}.
To prove that \optl{} is more expressive than \nwtl{},
we first show a way to translate a nested word into an ``almost isomorphic'' \optl{} structure; then, we give a translation schema for \nwtl{} formulas into equivalent \optl{} ones.

Recall that a nested word is a structure
$NW = \langle U, (P_\mathrm{a})_{\mathrm{a} \in \Lambda}, <, \munw, \mathtt{call}, \mathtt{ret} \rangle$,
where $\Lambda$ is the set of atomic propositions;
$U$ is a set of word positions such that $U = \{1, \dots, n\}$ if $NW$ is finite,
and $U = \mathbb{N}$ if it is a nested $\omega$-word;
`$<$' is the ordering of $\mathbb{N}$;
$P_\mathrm{a}$ is the set of positions labeled with $\mathrm{a}$.
$\munw$ is a binary relation and $\mathtt{call}$ and $\mathtt{ret}$ are unary relations
such that $\munw(i, j)$ implies $\mathtt{call}(i)$ and $\mathtt{ret}(j)$;
if $\munw(i,j)$ and $\munw(i,j')$ hold then $j=j'$,
while if $\munw(i,j)$ and $\munw(i',j)$ then $i=i'$;
and if $i \leq j$ and $\mathtt{call}(i)$ and $\mathtt{ret}(j)$ then there exists a position $k$
such that $i \leq k \leq j$, and either $\munw(i,k)$ or $\munw(k,j)$.
If $\munw(i,j)$ then $i$ is the matching call of $j$, which is the matching return of $i$.
If $\mathtt{call}(i)$ (resp. $\mathtt{ret}(j)$) but for no $j \in U$ (resp. $i \in U$)
we have $\munw(i,j)$, then $i$ (resp. $j$) is a \emph{pending} call (resp. return).

The syntax of NWTL is given by
\(
  \varphi :=
  \top 
  \mid \mathrm{a}
  \mid \mathtt{call}
  \mid \mathtt{ret}
  \mid \neg \varphi
  \mid \varphi \lor \varphi
  \mid \lnext \varphi
  \mid \lnext_\munw \varphi
  \mid \lback \varphi
  \mid \lback_\munw \varphi
  \mid \luntil{\sigma}{\varphi}{\varphi}
  \mid \lsince{\sigma}{\varphi}{\varphi}
\),
with $\mathrm{a} \in \Lambda$.
The semantics of propositional and LTL-like operators is the usual one.
For any $i \in U$,
$(NW, i) \models \mathtt{call}$ iff $\mathtt{call}(i)$ and similarly for $\mathtt{ret}$;
$(NW, i) \models \lnext_\munw \varphi$ iff there exists $j \in U$ such that $\munw(i,j)$
and $(NW, j) \models \varphi$, while the meaning of $\lback_\munw \varphi$ is analogous,
but it refers to the past.
The until and since operators $\luntil{\sigma}{}{}$ and $\lsince{\sigma}{}{}$ are based on summary paths.
A summary path between $i, j \in U$, $i < j$, is a sequence of positions
$i = i_0 < i_1 < \dots < i_n = j$ such that for any $0 \leq k < n$ we have either
$i_{k+1} = h$ if $\munw(i_k,h)$ and $h \leq j$, or
$i_{k+1} = i_k + 1$ otherwise.
  
Given any nested word $NW$ as defined above,
it is possible to build an equivalent algebraic structure for \optl{} as
$OW = \langle U', M^{NW}, P' \rangle$.
Given $U = \{ 1, \dots, n \}$, we have $U' = U \cup \{ 0, n+1 \}$.
The set of propositional letters is $AP = \Lambda \cup \Sigma$
with $\Sigma = \{ \mathbf{call}, \mathbf{ret}, \mathbf{i} \}$.
For any $i \in U$ we define
$P'(i) = \{ \mathrm{a} \in \Lambda \mid i \in P_\mathrm{a} \} \cup \sigma(i)$,
where $\sigma(i) = \{\mathbf{call}\}$ iff $\mathtt{call}(i)$,
$\sigma(i) = \{\mathbf{ret}\}$ iff $\mathtt{ret}(i)$,
and $\sigma(i) = \{\mathbf{i}\}$ otherwise.
Finally, the OPM $M^{NW}$ is shown in Figure~\ref{fig:nw-translation}.

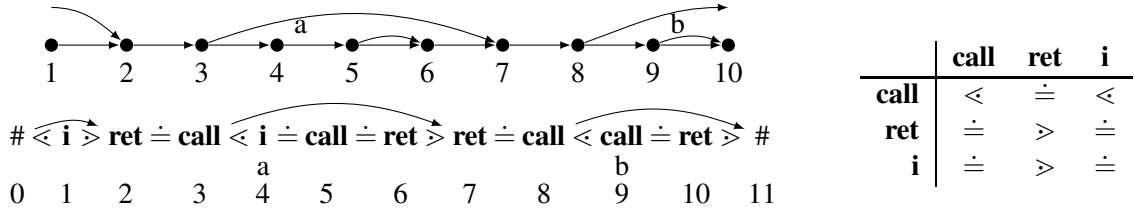
\begin{figure}
\begin{subfigure}{0.69\linewidth}
\begin{subfigure}{\linewidth}
\centering
\begin{tikzpicture}
[dot/.style={circle, fill=black, inner sep=0pt, minimum size=5pt},
edge/.style={->, >=latex}]
  \node[dot] (n1) at (0,0) [label=below:1] {};
  \node[dot] (n2) at (1,0) [label=below:2] {};
  \node[dot] (n3) at (2,0) [label=below:3] {};
  \node[dot] (n4) at (3,0) [label=below:4, label=10:$\mathrm{a}$] {};
  \node[dot] (n5) at (4,0) [label=below:5] {};
  \node[dot] (n6) at (5,0) [label=below:6] {};
  \node[dot] (n7) at (6,0) [label=below:7] {};
  \node[dot] (n8) at (7,0) [label=below:8] {};
  \node[dot] (n9) at (8,0) [label=below:9, label=10:$\mathrm{b}$] {};
  \node[dot] (n10) at (9,0) [label=below:10] {};

  \draw[edge, >=latex] (n1) -- (n2);
  \draw[edge] (n2) -- (n3);
  \draw[edge] (n3) -- (n4);
  \draw[edge] (n4) -- (n5);
  \draw[edge] (n5) -- (n6);
  \draw[edge] (n6) -- (n7);
  \draw[edge] (n7) -- (n8);
  \draw[edge] (n8) -- (n9);
  \draw[edge] (n9) -- (n10);

  \draw[edge] (0,0.5) to [out=0, in=140] (n2);
  \draw[edge] (n3) to [out=20, in=160] (n7);
  \draw[edge] (n5) to [out=20, in=160] (n6);
  \draw[edge] (n8) to [out=20, in=180] (9,0.5);
  \draw[edge] (n9) to [out=20, in=160] (n10);
\end{tikzpicture}
\end{subfigure}
\begin{subfigure}{\linewidth}
\centering
\begin{tikzpicture}
[scale=0.8, edge/.style={->, >=latex}]
\matrix (m) [matrix of math nodes, column sep=-5pt, row sep=-4pt]
{
  \bfsym{\#}
  & \lessdot & \mathbf{i} & \gtrdot & \mathbf{ret} & \doteq & \mathbf{call}
  & \lessdot & \mathbf{i} & \doteq & \mathbf{call} & \doteq & \mathbf{ret}
  & \gtrdot & \mathbf{ret} & \doteq & \mathbf{call} & \lessdot & \mathbf{call}
  & \doteq & \mathbf{ret} & \gtrdot
  & \bfsym{\#} \\
  & & & & & & & & \mathrm{a} & & & & & & & & & & \mathrm{b} & & & & \\
  0 & & 1 & & 2 & & 3 & & 4 & & 5 & & 6 & & 7 & & 8 & & 9 & & 10 & & 11 \\
};
  \draw[edge] (m-1-1) to [out=20, in=160] (m-1-5);
  \draw[edge] (m-1-7) to [out=20, in=160] (m-1-15);
  \draw[edge] (m-1-17) to [out=20, in=160] (m-1-23);
\end{tikzpicture}
\end{subfigure}
\end{subfigure}
\begin{subfigure}{0.3\linewidth}
\centering
\begin{tabular}{r | c c c }
                & $\mathbf{call}$ & $\mathbf{ret}$ & $\mathbf{i}$  \\
\hline
$\mathbf{call}$ & $\lessdot$      & $\doteq$       & $\lessdot$    \\
$\mathbf{ret}$  & $\doteq$        & $\gtrdot$      & $\doteq$      \\
$\mathbf{i}$    & $\doteq$        & $\gtrdot$      & $\doteq$      \\
\end{tabular}
\end{subfigure}
\caption{The top-left figure is the representation of a nested word example,
  and its translation into an \optl{} structure is shown below,
  w.r.t.\ the OPM $M^{NW}$ on the right.}
\label{fig:nw-translation}
\end{figure}

An example nested word is shown in Figure~\ref{fig:nw-translation},
along with its translation into an \optl{} word.
In the translation, all the call positions form the context of a chain with the matched return, except for consecutive positions $i, i+1 \in U$ such that $i$ is a call and $i+1$ a return.
Therefore, we are able to use the chain relation
to translate the matching relation of nested words,
except for consecutive call/return positions, which have to be considered separately.
Also, unmatched returns and calls form a chain with the first and last $\bfsym{\#}$ positions.
This is formalized by the following properties, whose proofs are omitted:
\begin{enumerate}
\item
  For any two distinct positions $i, j \in U$,
  $i < j$, if $\chain(i,j)$ holds
  then $\mathbf{call} \in P'(i)$ and $\mathbf{ret} \in P'(j)$.
\item \label{item:nw-chain-one-to-one}
  For any $i, j, j' \in U$ if $\chain(i, j)$ and $\chain(i, j')$ then $j = j'$,
  and for any $i, i', j \in U$ if $\chain(i, j)$ and $\chain(i', j)$ then $i = i'$.
\item \label{item:mu-nw-iff-chain-ow}
  For any $i, j \in U$ such that $j > i + 1$, we have $\munw(i, j)$ iff $\chain(i, j)$.
\item \label{item:summary-paths-opsp}
  Given any two positions $i, j \in U$, $i \leq j$,
  the summary path between $i$ and $j$ in $NW$ coincides with the
  OP-summary path between the same positions based on OP relations
  $O = \{ \lessdot, \allowbreak \doteq, \allowbreak \gtrdot \}$ in $OW$.
\end{enumerate}

After establishing a certain degree of isomorphism between nested words
and their \optl{} translations, we can give a translation schema from \nwtl{} to \optl{} formulas.

\begin{theorem}[\nwtl{} $\subseteq$ \optl{}]
  \label{thm:nwtl-subseteq-optl}
  Given an \nwtl{} formula $\varphi$, it is possible to translate it to
  an \optl{} formula $\varphi'$ of length linear in $|\varphi|$ such that,
  for any nested word $w$ and position $i$,
  if $w$ is translated into an OP word $w'$ as described at the beginning of
  Section~\ref{sec:OPTL vs NWTL},
  then $(w, i) \models \varphi$ iff $(w', i) \models \varphi'$, with $i \in U$.
\end{theorem}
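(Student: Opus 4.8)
The plan is to proceed by structural induction on the NWTL formula $\varphi$, defining the OPTL translation $\varphi'$ homomorphically and checking at each connective that $(w,i)\models\varphi \iff (w',i)\models\varphi'$ for every $i\in U$, using the four isomorphism properties just listed. The base cases are immediate from the definition of $P'$: I would set $\top' = \top$, $\mathrm{a}' = \mathrm{a}$ for $\mathrm{a}\in\Lambda$, $\mathtt{call}' = \mathbf{call}$ and $\mathtt{ret}' = \mathbf{ret}$, and let the Boolean cases pass through unchanged. For the linear modalities I would guard against the two fence positions $0$ and $n+1$, which carry only $\bfsym{\#}$ and lie outside $U$: translating $\lnext\varphi$ as $\lnext((\mathbf{call}\lor\mathbf{ret}\lor\mathbf{i})\land\varphi')$, and $\lback\varphi$ symmetrically, makes the formula false exactly when the linear successor (resp.\ predecessor) would fall on a $\bfsym{\#}$, matching the NWTL semantics at the first and last positions.

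The core of the argument is the matching modalities $\lnext_{\munw}$ and $\lback_{\munw}$. Here I would exploit that, by property~\ref{item:nw-chain-one-to-one}, the chain relation in $w'$ is one-to-one, so $\fchain$ and $\bchain$ coincide with $\chain$; and, by property~\ref{item:mu-nw-iff-chain-ow}, $\munw(i,j)\iff\chain(i,j)$ whenever $j>i+1$. Thus $\lanext$ already captures matching next for non-adjacent pairs, and the only gap is the trivial case of a call immediately followed by its matching return, which does not form a chain under $M^{NW}$ (there $\mathbf{call}\doteq\mathbf{ret}$). A short well-nesting argument shows that whenever $i$ is a call and $i+1$ a return they must match, so this residual case is handled by one linear step. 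I would therefore set $(\lnext_{\munw}\varphi)' \equiv \lanext(\mathbf{ret}\land\varphi')\lor(\mathbf{call}\land\lnext(\mathbf{ret}\land\varphi'))$, and dually $(\lback_{\munw}\varphi)' \equiv \laback(\mathbf{call}\land\varphi')\lor(\mathbf{ret}\land\lback(\mathbf{call}\land\varphi'))$. The $\mathbf{ret}$ (resp.\ $\mathbf{call}$) guard on the target is essential: by the first listed property a forward chain always ends at a $\mathbf{ret}$, so this guard discards the spurious chain that a pending call (resp.\ return) forms with the ending (resp.\ starting) $\bfsym{\#}$, keeping the modality false at pending positions exactly as in NWTL. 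The same well-nesting observation makes the two disjuncts mutually exclusive, so precisely the intended one fires.

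The summary operators are then the easy direction. By property~\ref{item:summary-paths-opsp} the NWTL summary path between any $i,j\in U$ is exactly the forward (resp.\ backward) OP-summary path taken with the full relation set $O=\{\lessdot,\doteq,\gtrdot\}$, so I would translate $\luntil{\sigma}{\varphi}{\psi}$ as $\luntil{\lessdot\doteq\gtrdot}{\varphi'}{\psi'}$ and $\lsince{\sigma}{\varphi}{\psi}$ symmetrically; the full relation set is what lets the OP-summary path take the linear step across an adjacent matched call/return exactly where the NWTL path follows $\munw$. Since both operators quantify over the same unique path and impose the same ``$\varphi$ until $\psi$'' condition along it, the equivalence follows from the induction hypotheses on $\varphi$ and $\psi$; I would additionally conjoin the real-position predicate to the target $\psi'$ so that the OPTL path cannot spuriously stop on the $\bfsym{\#}$ at $n+1$ (which is in the OPTL universe but not in $U$), after which property~\ref{item:summary-paths-opsp} keeps every path position inside $U$.

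Finally, for the length bound I would observe that each NWTL connective is translated by wrapping its already-translated arguments in a constant number of OPTL operators, so the number of \emph{distinct} subformulas grows by an additive constant per connective and the closure of $\varphi'$—the size measure that governs the subsequent automaton construction—is linear in $|\varphi|$. The step I expect to be the main obstacle is precisely the matching-next/back case: getting the adjacent-call/return boundary and the pending-call/return boundary simultaneously right, so that exactly one disjunct fires and nothing leaks onto the $\bfsym{\#}$ fences, and making sure the two textual occurrences of $\varphi'$ are \emph{shared} rather than copied, so that nested matching modalities do not incur an exponential blow-up.
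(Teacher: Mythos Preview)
Your structural-induction skeleton and your correctness arguments are sound, and in fact you are more careful than the paper about the boundary \(\bfsym{\#}\) positions (the paper silently treats \(\lnext\), \(\lback\) and the summary endpoints as ``trivial''). For the summary operators your translation coincides with the paper's.

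The one genuine gap is in the matching modalities and the size bound. Your translation \((\lnext_{\munw}\varphi)' \equiv \lanext(\mathbf{ret}\land\varphi')\lor(\mathbf{call}\land\lnext(\mathbf{ret}\land\varphi'))\) is semantically correct---and the paper itself notes this exact formula in a footnote---but it duplicates \(\varphi'\), so nesting \(k\) matching-next operators yields a formula of length \(\Theta(2^{k})\). You try to rescue linearity by switching the size measure to the number of distinct subformulas (closure size), but the theorem as stated asks for \emph{length} linear in \(|\varphi|\), i.e.\ syntactic tree size; closure size is a different quantity. The paper avoids the blow-up by a different, single-occurrence translation:
\[
\alpha(\lnext_{\munw}\varphi)=\bigl(\luntil{\doteq}{\mathbf{call}}{(\mathbf{ret}\land\alpha(\varphi))}\bigr)\land\neg\mathbf{ret},
\]
and dually for \(\lback_{\munw}\). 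The trick is that restricting the OP-summary until to \(O=\{\doteq\}\) makes one operator handle \emph{both} cases at once: if \(\fchain(i,j)\) the path skips directly to \(j\); if instead \(i\) is a call and \(i{+}1\) a return (so no chain starts at \(i\) because \(\mathbf{call}\doteq\mathbf{ret}\)), the \(\doteq\) step lands on \(i{+}1\). The conjunct \(\neg\mathbf{ret}\) kills the trivial length-one path when \(i\) itself is a return, and the \(\mathbf{ret}\) guard on the target rules out the pending-call chain to \(\bfsym{\#}\). This yields a translation with a single occurrence of \(\alpha(\varphi)\), hence truly linear length. If you replace your disjunctive clause by this one, the rest of your proof goes through unchanged and matches the paper's.
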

\begin{proof}
  Let $w'$ be an OP word built from $w$ as described above.
  For any \nwtl{} formula $\varphi$ we denote as $\varphi' = \alpha(\varphi)$
  the \optl{} formula that satisfies
  $(w, i) \models \varphi$ iff $(w', i) \models \varphi'$.
  The translation function $\alpha$ is defined for non-trivial arguments as follows:
  
\noindent $\bullet$
\(\alpha(\lnext_\munw \varphi)
    = (\luntil{\doteq}{\mathbf{call}}{(\mathbf{ret} \land \alpha(\varphi))})
    \land \neg \mathbf{ret}
    \),
    and \(\alpha(\lback_\munw \varphi)
    = (\lsince{\doteq}{\mathbf{ret}}{(\mathbf{call} \land \alpha(\varphi))})
    \land \neg \mathbf{call}
    \).
    \footnote{We chose these translations instead of the more straightforward
    \(
      \alpha(\lnext_\munw \varphi) =
      \lanext \alpha(\varphi)
      \lor (\mathbf{call} \land \lnext(\mathbf{ret} \land \alpha(\varphi))
    \)
    (and an analogous one for $\lback_\munw \varphi$)
    because the latter causes an exponential blowup in formula length.}

    Let
    $\gamma \equiv \luntil{\doteq}{\mathbf{call}}{(\mathbf{ret} \land \alpha(\varphi))}$.
    In \nwtl{} we have $(w, i) \models \lnext_\munw \varphi$
    iff there exists $j \in U$ such that $\munw(i, j)$ and $(w, j) \models \varphi$.
    Because of Prop.~\ref{item:mu-nw-iff-chain-ow}, if such a position $j$ exists
    and $j > i+1$ then also $\chain(i, j)$ holds and,
    because of Prop.~\ref{item:nw-chain-one-to-one}, we have $\fchain(i, j)$.
    Consider the path only made of $i$ and $j$:
    it is an OP summary path, because it falls in the first case of the definition.
    Since by construction $\mathbf{call} \in P'(i)$ and $\mathbf{ret} \in P'(j)$,
    if $\alpha(\varphi)$ holds in $j$, then $\gamma$ is satisfied,
    and this is the only path in which $\gamma$ is true.
    Paths terminating in a position strictly between $i$ and $j$
    are forbidden by allowing only the $\doteq$ relation;
    also, all paths surpassing $j$ must include it,
    but $\mathbf{call} \notin P'(j)$ falsifies $\gamma$.
    If $i$ and $j$ are such that $\munw(i, j)$ but $j = i+1$,
    we have $\mathbf{call} \in P'(i)$ and $\mathbf{ret} \in P'(j)$,
    so $i \doteq j$, and the path made of $i$ and $j$ is valid.
    Finally, if $(w, i) \not\models \lnext_\munw \varphi$
    because position $i$ is not a matched call,
    then $\mathbf{i} \in P'(i)$ and $\gamma$ is false because
    $\mathbf{call}, \mathbf{ret} \notin P'(i)$.
    If $i$ is an unmatched call, then $\chain(i, n+1)$, but $\mathbf{ret} \notin P'(n+1)$,
    which falsifies $\gamma$.
    If $\mathtt{ret}(i)$, then $\neg \mathbf{ret}$ is false.
    The translation for $\lback_\munw$ is similar.
    
\noindent $\bullet$
  \(\alpha(\luntil{\sigma}{\varphi}{\psi})
    = \luntil{\gtrdot \doteq \lessdot}{\alpha(\varphi)}{\alpha(\psi)}\)
    and \(\alpha(\lsince{\sigma}{\varphi}{\psi})
    = \lsince{\gtrdot \doteq \lessdot}{\alpha(\varphi)}{\alpha(\psi)}\):
    from Prop.~\ref{item:summary-paths-opsp} we know that
    the set of summary paths starting from position $i$ in $w$ corresponds to the
    set of OP summary paths starting from $i$ in $w'$,
    which implies the OP summary operators coincide with their \nwtl{} counterparts.
  
By induction on the syntactic structure of $\varphi$,
we can conclude \nwtl{} $\subseteq$ \optl{}.
\end{proof}

For example, take formula $\varphi = \luntil{\sigma}{(\neg \mathrm{a})}{\mathrm{b}}$:
the nested word of Fig.~\ref{fig:nw-translation} satisfies $\varphi$
because of summary path 1-2-3-7-8-9.
$\varphi$ is translated into
\(
  \varphi' =
    \luntil{\gtrdot \doteq \lessdot}
           {(\neg \mathrm{a})}
           {\mathrm{b}},
\)
which is satisfied by the \optl{} structure of Fig.~\ref{fig:nw-translation},
where the OP summary path covering the same positions above witnesses its truth.

\section{Model Checking Operator Precedence Languages}
\label{sec:model checking}

Given a set of atomic propositions $AP$, an OPM $M_\powset{AP}$, and an \optl{} formula $\varphi$ containing atomic propositions in $AP$,
we explain how to build an automaton
$\mathcal{A}_\varphi$ accepting the finite strings on $\powset{AP}$ satisfying $\varphi$, thus compatible with $M_\powset{AP}$.
We later sketch the way to extend this definition to infinite words.
Let
\(
  \mathcal{A}_\varphi =
  \langle
     \powset{AP}, \allowbreak
     M_\powset{AP}, \allowbreak
     \atoms{\varphi}, \allowbreak
     I, \allowbreak
     F, \allowbreak
     \delta
  \rangle
\)
be an OPA, with $M_\powset{AP}$ being an OPM,
and $\atoms{\varphi}$ being the set of states.
We first describe how the OPA is built for \ltl{}-like operators,
which is done with the classic procedure by \cite{WolperVardiSistla1983},
and then treat other operators separately.
Initially, the \emph{closure} $\clos{\varphi}$
of a formula $\varphi$ is defined as the smallest set such that $\varphi \in \clos{\varphi}$,
$AP \subseteq \clos{\varphi}$, if $\psi \in \clos{\varphi}$ and $\psi \neq \neg \theta$
for any \optl{} formula $\theta$, then $\neg \psi \in \clos{\varphi}$;
if any of $\neg \psi$, $\lnext \psi$ or $\lback \psi$ is in $\clos{\varphi}$,
then $\psi \in \clos{\varphi}$;
if any of $\psi \land \theta$ or $\psi \lor \theta$ is in $\clos{\varphi}$,
then $\psi \in \clos{\varphi}$ and $\theta \in \clos{\varphi}$.
We will further enrich $\clos{\varphi}$ when dealing with non-\ltl{} operators.
The set $\atoms{\varphi}$ contains all consistent sets $\Phi \subseteq \clos{\varphi}$,
i.e., such that for every $\psi \in \clos{\varphi}$, $\psi \in \Phi$ iff $\neg \psi \notin \Phi$;
if $\psi \land \theta \in \Phi$, then $\psi \in \Phi$ and $\theta \in \Phi$;
if $\psi \lor \theta \in \Phi$, then $\psi \in \Phi$ or $\theta \in \Phi$.
The set of initial states $I$ consists only of atoms containing $\varphi$,
but with no $\lback$ formula.
The set of final states $F$ contains all atoms $\Phi \in \atoms{\varphi}$
such that for any $\lnext \psi \in \clos{\varphi}$, $\lnext \psi \not\in \Phi$,
so no word terminating with unsatisfied temporal requirements can be accepted,
and $AP \cap \Phi = \{ \bfsym{\#} \}$, so the last position of each word must contain
the terminal symbol only.
Satisfaction of temporal constraints is achieved by the transition relation $\delta$.
For any $\Phi, \Theta \in \atoms{\varphi}$ and $a \in \powset{AP}$,
$(\Phi, a, \Theta) \in \delta_\text{push} \cap \delta_\text{shift}$
iff for any $\mathrm{p} \in AP$, $\mathrm{p} \in \Phi$ iff $\mathrm{p} \in a$;
$\lnext \psi \in \Phi$ iff $\psi \in \Theta$;
$\lback \psi \in \Theta$ iff $\psi \in \Phi$, for any formula $\psi$.
Since only push and shift transitions read terminal symbols, they fulfill next and back requirements.
Pop transitions are important for checking operators depending on the chain relation,
but for the moment we prevent them from removing sub-formulas introduced in a previous state
by other transitions: for any $\Phi, \Theta, \Psi \in \atoms{\varphi}$,
$(\Phi, \Theta, \Psi) \in \delta_\mathit{pop}$ only if $\Psi$ is the smallest set such that
$\Phi \subseteq \Psi$, and for any $\mathrm{p} \in AP$, $\mathrm{p} \in \Phi$
iff $\mathrm{p} \in \Psi$, and containing additional formulas required by the rules detailed in the next paragraphs.

Additionally, for the linear until operator, let $\Phi \in \atoms{\varphi}$,
with $\lluntil{\psi}{\theta} \in \Phi$.
Then $\psi, \theta, \lnext(\lluntil{\psi}{\theta}) \in \clos{\varphi}$,
and either $\theta \in \Phi$,
or $\psi \in \theta$ and $\lnext(\lluntil{\psi}{\theta}) \in \Phi$.

\begin{figure}
\begin{subfigure}{0.3\linewidth}
\centering
\begin{tabular}{r | c c c}
  & $\mathbf{a}$ & $\mathbf{b}$ & $\mathbf{c}$ \\
  \hline
  $\mathbf{a}$ & $\gtrdot$ & $\lessdot$ & $\doteq$ \\
  $\mathbf{b}$ & $\gtrdot$ & $\gtrdot$  & $\gtrdot$ \\
  $\mathbf{c}$ & $\gtrdot$ & $\gtrdot$  & $\lessdot$ \\
\end{tabular}
\end{subfigure}
\begin{subfigure}{0.7\linewidth}
\centering
{\footnotesize
\begin{tabular}{c | r | r | r}
  step & input & state & stack \\
  \hline
  1 & $\mathbf{a} \lessdot \mathbf{b} \gtrdot \mathbf{b} \gtrdot \mathbf{c} \gtrdot \bfsym{\#}$ & $\{\mathbf{a}, \lanext \mathbf{c}\}$
    & $\bot$ \\
  2 & $\mathbf{b} \gtrdot \mathbf{b} \gtrdot \mathbf{c} \gtrdot \bfsym{\#}$
    & $\{\mathbf{b}, \lapnext \mathbf{c}\}$
    & $[\mathbf{a}, \{\mathbf{a}, \lanext \mathbf{c}\}] \bot$ \\
  3 & $\mathbf{b} \gtrdot \mathbf{c} \gtrdot \bfsym{\#}$ & $\{\mathbf{b}\}$
    & $[\mathbf{b}, \{\mathbf{b}, \lapnext \mathbf{c}\}] [\mathbf{a}, \{\mathbf{a}, \lanext \mathbf{c}\}] \bot$ \\
  4 & $\mathbf{b} \gtrdot \mathbf{c} \gtrdot \bfsym{\#}$ & $\{\mathbf{b}, \lapnext \mathbf{c}\}$
    & $[\mathbf{a}, \{\mathbf{a}, \lanext \mathbf{c}\}] \bot$ \\
  5 & $\mathbf{c} \gtrdot \bfsym{\#}$ & $\{\mathbf{c}\}$
    & $[\mathbf{b}, \{\mathbf{b}, \lapnext \mathbf{c}\}] [\mathbf{a}, \{\mathbf{a}, \lanext \mathbf{c}\}] \bot$ \\
  6 & $\mathbf{c} \gtrdot \bfsym{\#}$ & $\{\mathbf{c}, \lapnext \mathbf{c}, \laenext \mathbf{c}\}$
    & $[\mathbf{a}, \{\mathbf{a}, \lanext \mathbf{c}\}] \bot$ \\
  7 & $\bfsym{\#}$ & $\{\bfsym{\#}\}$
    & $[\mathbf{c}, \{\mathbf{a}, \lanext \mathbf{c}\}] \bot$ \\
  8 & $\bfsym{\#}$ & $\{\bfsym{\#}\}$
    & $\bot$ \\
\end{tabular}
}
\end{subfigure}
\caption{OPM (left) and a computation (right) on $\mathbf{a} \mathbf{b} \mathbf{b} \mathbf{c}$ of the automaton for formula $\lanext \mathbf{c}$.
 }
\label{fig:lanext-plus-accepting-run}
\end{figure}

\noindent \textbf{Matching Next ($\lanext$) Operator.}
The $\lanext$ and $\laback$ operators are defined w.r.t.\ maximal chains:
the main difficulty posed by their model-checking is building
automata able to discern them from inner chains, when multiple chains start in the same position.
The automaton for model-checking $\lanext$ relies on the stack
in order to keep track of its requirements:
if a formula $\lanext \psi$ holds in a state where a chain begins,
the automaton stores the auxiliary symbol $\lapnext \psi$ onto the stack,
and pops it when the chain ends, forcing $\psi$ to hold in the appropriate state.
More formally, if $\lanext \psi \in \clos{\varphi}$, then we add
\(
  \psi,
  \lapnext \psi,
  \laenext \psi
  \in \clos{\varphi}
\).
Also, we enrich the previous constraints on $\delta$ as follows:
for any $\Phi, \Theta, \Psi \in \atoms{\varphi}$ and $a \in \powset{AP}$,
\begin{inparaenum}[(1)]
\item \label{item:lanext-push-shift-constr}
  $(\Phi, a, \Theta) \in \delta_\text{push} \cap \delta_\text{shift}$
  iff when $\lanext \psi \in \Phi$, we have $\lapnext \psi \in \Theta$
  and $\laenext \psi \not\in \Theta$;
\item \label{item:lanext-end-shift}
  $(\Phi, a, \Theta) \in \delta_\text{shift}$
  iff whenever $\lapnext \psi \in \Phi$, we also have $\psi \in \Phi$,
  $\laenext \psi \in \Phi$, and $\laenext \psi \not\in \Theta$;
  $(\Phi, \Theta, \Psi) \in \delta_\mathit{pop}$ iff
\item \label{item:if-lapnext-theta-lapnext-psi}
  when $\lapnext \psi \in \Theta$, we have $\lapnext \psi \in \Psi$, and
\item \label{item:if-lapnext-phi-lapnext-psi}
  when $\lapnext \psi \in \Phi$, then also $\psi \in \Phi$,
  $\laenext \psi \in \Phi$, and $\laenext \psi \not\in \Psi$.
\end{inparaenum}
Finally, for any formula $\lanext \psi \in \clos{\varphi}$,
we must exclude from $F$ all atoms containing $\lapnext \psi$
or $\lanext \psi$.

To explain how the rules above work, we refer to the example computation
of the automaton for formula $\lanext \mathbf{c}$
of Fig.~\ref{fig:lanext-plus-accepting-run}.
$\lanext \mathbf{c}$ holds in the initial state:
since the symbol $\bfsym{\#}$ yields precedence to all other symbols,
the first one (namely $a = \{\mathbf{a}\}$) is consumed by a push transition.
Due to constraint (\ref{item:lanext-push-shift-constr}),
the auxiliary operator $\lapnext \psi$ is forced into the next state.
If $a$ was not the first symbol, and it had been read by a shift,
the same constraint would hold.
Since a chain starts in $a$, the next symbol is also read by a push transition,
which stores the previous state, containing $\lapnext \mathbf{c}$,
on the stack (step 2-3 of Fig.~\ref{fig:lanext-plus-accepting-run}).
Then, the chain ends with the second $\{\mathbf{b}\}$ symbol,
and the state mentioned above is popped.
The operand of $\lanext$ must hold at the end of the outermost chain,
and this is an inner one. However, the automaton still cannot discern between the two cases,
and must wait for the next transition to occur,
so rule (\ref{item:if-lapnext-theta-lapnext-psi}) just forces $\lapnext \mathbf{c}$ into
the next state (step 3-4).
If the next transition is a push, it means $a$ yields precedence to the symbol read by it,
and this is not the outermost chain: the current state, containing $\lapnext \mathbf{c}$,
is again pushed on stack (step 4-5).
The process goes on until the pop transition at the end of a chain starting in $a$
is followed by another pop or a shift: this either means $a$ takes precedence
from the current symbol, and its stack symbol must be popped,
or they are equal in precedence. In both cases, we reached the end of the outermost chain,
and $\psi$ must hold in the next state.
First, in step 5-6 rule (\ref{item:if-lapnext-theta-lapnext-psi})
puts $\lapnext \mathbf{c}$ in the current state:
for rule (\ref{item:lanext-end-shift}) to be satisfied by the shift transition of step 6-7,
also $\mathbf{c}$ and $\laenext \mathbf{c}$ must hold.
If in step 6-7 a pop transition occurred instead of a shift,
the same constraints would have been enforced by rule (\ref{item:if-lapnext-phi-lapnext-psi}).
The symbol $\laenext$ is another auxiliary operator that marks the satisfaction
of the original $\lanext$ formula, and its purpose is to make rule
(\ref{item:lanext-push-shift-constr})
block computations in which a chain does not start in $a$,
but $\psi$ holds in the next symbol.
The $\laback$ operator can be treated in an analogous way.

\noindent \textbf{OP Summary Until (Since).}
For any $O \subseteq \{ \lessdot, \doteq, \gtrdot \}$,
if $\luntil{O}{\psi}{\theta} \in \clos{\varphi}$ then
\(
  \psi, \allowbreak
  \theta, \allowbreak
  \lanext(\luntil{O}{\psi}{\theta}), \allowbreak
  \lnext(\luntil{O}{\psi}{\theta})
  \in \clos{\varphi}
\).
Also, for any $\Phi \in \atoms{\varphi}$,
$\luntil{O}{\psi}{\theta} \in \Phi$ iff either:
\begin{inparaenum}[(1)]
\item \label{item:opsuntil-trivial}
  $\theta \in \Phi$; or
\item \label{item:opsuntil-lanext}
  $\psi \in \Phi$ and $\lanext(\luntil{O}{\psi}{\theta}) \in \Phi$; or
\item \label{item:opsuntil-lnext}
  $\psi \in \Phi$ and $\lnext(\luntil{O}{\psi}{\theta}) \in \Phi$.
\end{inparaenum}
If (\ref{item:opsuntil-trivial}) holds, then $\luntil{O}{\psi}{\theta}$ is trivially true;
if (\ref{item:opsuntil-lanext}) holds, the path skips the body of a chain
starting in the current position, where $\psi$ holds;
if (\ref{item:opsuntil-lnext}) holds, then $\psi$ is true in the current position
and the path continues in the next one.
In the latter case, we must make sure the path is followed only if the current position and
the next one are in one of the precedence relations in $O$.
This can be achieved by adding the following constraints:
if only (\ref{item:opsuntil-lnext}) holds in a state $\Phi$ and $\luntil{O}{\psi}{\theta} \in \Phi$,
then for any $a, b \in \powset{AP}$ and $\Theta \in \atoms{\varphi}$,
with $b = \Theta \cap AP$, we have
$(\Phi, a, \Theta) \in \delta_\mathit{push} \cup \delta_\mathit{shift}$
only if $a \odot b$ and $\odot \in O$.
We need not impose constraints on pop transitions because they do not consume input symbols,
and they preserve the same subset of $AP$ contained in the starting state.
Furthermore, for any $\Phi \in \atoms{\varphi}$,
if $\Phi \in F$ then $\luntil{O}{\psi}{\theta} \not\in \Phi$,
unless $\theta \in \Phi$.

The construction for the since counterpart is analogous:
$\lanext$ and $\lnext$ are substituted with $\laback$ and $\lback$.

\noindent \textbf{Yield-Precedence Hierarchical Until.}
This construction exploits the automaton's stack to keep track of the requirements
of the $\lhyuntil{}{}$ operator in a way similar to $\lanext$.
If $\lhyuntil{\psi}{\theta} \in \clos{\varphi}$
we introduce the auxiliary operators $\lhypuntil{}{}$ and $\lhyeuntil{}{}$.
We add
\(
  \lhypuntil{\psi}{\theta}, \allowbreak
  \lnext (\lhypuntil{\psi}{\theta}), \allowbreak
  \lhyeuntil{\psi}{\theta}
  \in \clos{\varphi}
\),
and, for any $\Phi \in \atoms{\varphi}$
such that $\lhyuntil{\psi}{\theta} \in \Phi$, also $\lnext (\lhypuntil{\psi}{\theta}) \in \Phi$.
Given any $\Phi, \Theta, \Psi \in \atoms{\varphi}$,
we have $(\Phi, \Theta, \Psi) \in \delta_\mathit{pop}$
\begin{inparaenum}[(1)]
\item \label{item:yphu-pop-stop-maximal}
  iff $\lhypuntil{\psi}{\theta} \not\in \Phi$ and
  $\lhyeuntil{\psi}{\theta} \not\in \Phi$,
  and if $\lhypuntil{\psi}{\theta} \in \Theta$, then either
\item \label{item:yphu-pop-end}
  $\theta \in \Psi$ and $\lhyeuntil{\psi}{\theta} \in \Psi$, or
\item
  $\psi \in \Psi$ and $\lhypuntil{\psi}{\theta} \in \Psi$.
Moreover, for any $\Phi, \Theta \in \atoms{\varphi}$ and $a \in \powset{AP}$,
\item \label{item:yphu-shift-stop-maximal}
  if $(\Phi, a, \Theta) \in \delta_\mathit{shift}$ then
  $\lhypuntil{\psi}{\theta}, \lhyeuntil{\psi}{\theta} \not\in \Phi$;
  and
\item
  if $(\Phi, a, \Theta) \in \delta_\mathit{push}$ then
  $\lhyeuntil{\psi}{\theta} \not\in \Theta$.
\end{inparaenum}
Moreover, states containing $\lhypuntil{\psi}{\theta}$ or $\lhyeuntil{\psi}{\theta}$
are excluded from the final set $F$.

Similarly to what happens for $\lanext$, the auxiliary operator $\lhypuntil{\psi}{\theta}$
is put into the state after the one in which $\lhyuntil{\psi}{\theta}$ holds,
this time by a $\lnext$ operator.
Then, the state containing the auxiliary symbol is pushed on stack.
Each time it is popped, we check if either $\theta$ holds in the next state
(rule (\ref{item:yphu-pop-end})), and $\lhyeuntil{\psi}{\theta}$ is put into the next state
to mark the end of the hierarchical path, or if $\psi$ holds and the path continues,
so $\lhypuntil{\psi}{\theta}$ must continue being propagated.
Let $j$ be the ending position of the maximal chain starting
where $\lhyuntil{\psi}{\theta}$ is supposed to hold.
The purpose of rules (\ref{item:yphu-pop-stop-maximal}) and
(\ref{item:yphu-shift-stop-maximal}) is to stop the computation
if the hierarchical path does not contain a position before $j$ in which $\theta$ holds
(recall that $j$ is not part of the hierarchical path, only non-maximal chain ends are). 

\noindent \textbf{Other Hierarchical Operators.}
The automata recognizing other hierarchical operators can be built similarly.
In particular, the automaton for $\lhysince{}{}$ is similar to the one for $\lhyuntil{}{}$,
while those for $\lhtuntil{}{}$ and $\lhtsince{}{}$ formulas are closer to the one recognizing
$\laback$.

\noindent \textbf{Complexity.}
The size of set $\clos{\varphi}$ is linear w.r.t the length of formula $\varphi$,
and set $\atoms{\varphi}$, which contains the states of the automaton,
is a subset of $\powset{\clos{\varphi}}$, being of size exponential in $\clos{\varphi}$.
Therefore, we can state that for any \optl{} formula $\varphi$
it is possible to build an OPA of size $2^{O(|\varphi|)}$
that accepts models satisfying $\varphi$.
Note that, despite translating hierarchical operators into summary operators
yields an exponential blow-up of formula length (see Section~\ref{subsec:optl-equivalence}),
they do not lead to an increase in the complexity of model checking.
This is in contrast with the behavior of the within operator in \nwtl{},
which exponentially increases model checking complexity.

\subsection{Hints for Model Checking Infinite Words}
\label{MC infinite}
Model checking for an \optl{} formula $\varphi$ on infinite words
can be performed by building a generalized OP B\"uchi Automaton
($\omega$OPBA, \cite{LonatiEtAl2015})
\(
  \mathcal{A}_\varphi^\omega =
  \langle \powset{AP}, M_\powset{AP}, \atoms{\varphi}, I, \mathbf{F}, \delta \rangle
\),
which differs from the finite-word counterpart only for the acceptance condition.
$\mathbf{F}$ is the set of sets of B\"uchi final states,
and an $\omega$-word is accepted iff at least one state from each one of the
sets contained in $\mathbf{F}$ is visited infinitely often during the computation.
The main difficulty in adapting the finite-word construction to the infinite case
is the acceptance condition.
In finite words, the stack is empty at the end of every accepting computation,
which implies all temporal constraints tracked by stack symbols must have been satisfied.
In $\omega$OPBAs, the stack may never be empty, and symbols containing auxiliary operators
may remain buried forever, never enforcing the satisfaction of the formulas they refer to.
This problem can be solved by defining additional auxiliary operators for each temporal formula
that requires stack support, such as $\lanext$, $\laback$ and hierarchical operators.
Suppose we want to model check $\lanext \psi$.
One of these new symbols, $\laknext \psi$, must be inserted in the current state whenever
$\lapnext \psi$ is pushed on stack, and kept in the automaton's state until $\lapnext \psi$
is popped, and its temporal requirement satisfied.
When this happens another symbol, say $\laoenext \psi$,
must be placed into the current state to mark satisfaction, and removed afterwards.
Then, it is possible to define an acceptance set $F \in \mathbf{F}$
for each use of this temporal operator.
$F$ only contains atoms that either contain $\laoenext \psi$
(so the formula can be satisfied infinitely often),
or atoms not containing $\laknext \psi$
(so words where $\lanext \psi$ needs not be satisfied infinitely often are accepted).
When adding constraints on the transition relation to realize this behavior,
particular care must be taken in order to prevent $\laoenext \psi$ form remaining when it should not,
and not to remove $\laknext \psi$ too early when multiple instances of the same subformula
are pending, and only one of them is satisfied.

\section{Conclusions}
\label{sec:conclusion}

We presented a new temporal logic based on the OPL family.
We proved that it is more expressive than \nwtl{}, which is based on the less powerful class VPL. \optl{} can express more properties than other similar
temporal logics, with the possibility of model checking with an automaton of size not greater
than competing formalisms. A further natural step in the theoretical characterization of \optl{} is a more complete exploration of its expressiveness, in particular concerning First-Order completeness, as it has been done for various temporal logics, including NWTL \cite{lmcs/AlurABEIL08}.

On the side of practical application we are planning the development and implementation of suitable model checking algorithms exploiting the theoretical procedures presented in this paper. We also believe that ``core temporal logic languages", such as LTL, CTL, CTL*, and, even more, those derived therefrom to attack subclasses of CFLs, including \nwtl{} and \optl{}, lack user-friendliness and require excessive mathematical skill to translate an informal requirement into a well-defined formula; thus, we envisage a higher level interface for \optl{} more palatable for users familiar with semi-formal notations like UML.

\bibliographystyle{eptcs}
\bibliography{optl}

\begin{thebibliography}{10}
\providecommand{\bibitemdeclare}[2]{}
\providecommand{\surnamestart}{}
\providecommand{\surnameend}{}
\providecommand{\urlprefix}{Available at }
\providecommand{\url}[1]{\texttt{#1}}
\providecommand{\href}[2]{\texttt{#2}}
\providecommand{\urlalt}[2]{\href{#1}{#2}}
\providecommand{\doi}[1]{doi:\urlalt{http://dx.doi.org/#1}{#1}}
\providecommand{\bibinfo}[2]{#2}

\bibitemdeclare{article}{lmcs/AlurABEIL08}
\bibitem{lmcs/AlurABEIL08}
\bibinfo{author}{R.~\surnamestart Alur\surnameend},
  \bibinfo{author}{M.~\surnamestart Arenas\surnameend},
  \bibinfo{author}{P.~\surnamestart Barcel{\'o}\surnameend},
  \bibinfo{author}{K.~\surnamestart Etessami\surnameend},
  \bibinfo{author}{N.~\surnamestart Immerman\surnameend} \&
  \bibinfo{author}{L.~\surnamestart Libkin\surnameend} (\bibinfo{year}{2008}):
  \emph{\bibinfo{title}{First-Order and Temporal Logics for Nested Words}}.
\newblock {\sl \bibinfo{journal}{Logical Methods in Computer Science}}
  \bibinfo{volume}{4}(\bibinfo{number}{4}), \doi{10.2168/LMCS-4(4:11)2008}.

\bibitemdeclare{article}{DBLP:journals/toplas/AlurCM11}
\bibitem{DBLP:journals/toplas/AlurCM11}
\bibinfo{author}{R.~\surnamestart Alur\surnameend},
  \bibinfo{author}{S.~\surnamestart Chaudhuri\surnameend} \&
  \bibinfo{author}{P.~\surnamestart Madhusudan\surnameend}
  (\bibinfo{year}{2011}): \emph{\bibinfo{title}{Software model checking using
  languages of nested trees}}.
\newblock {\sl \bibinfo{journal}{{ACM} Trans. Program. Lang. Syst.}}
  \bibinfo{volume}{33}(\bibinfo{number}{5}), pp. \bibinfo{pages}{15:1--15:45},
  \doi{10.1145/2039346.2039347}.

\bibitemdeclare{article}{AlurDill1994a}
\bibitem{AlurDill1994a}
\bibinfo{author}{R.~\surnamestart Alur\surnameend} \& \bibinfo{author}{D.~L.
  \surnamestart Dill\surnameend} (\bibinfo{year}{1994}):
  \emph{\bibinfo{title}{{A Theory of Timed Automata}}}.
\newblock {\sl \bibinfo{journal}{Theor. Comput. Sci.}}
  \bibinfo{volume}{126}(\bibinfo{number}{2}), pp. \bibinfo{pages}{183--235},
  \doi{10.1016/0304-3975(94)90010-8}.

\bibitemdeclare{article}{jacm/AlurM09}
\bibitem{jacm/AlurM09}
\bibinfo{author}{R.~\surnamestart Alur\surnameend} \&
  \bibinfo{author}{P.~\surnamestart Madhusudan\surnameend}
  (\bibinfo{year}{2009}): \emph{\bibinfo{title}{Adding nesting structure to
  words}}.
\newblock {\sl \bibinfo{journal}{JACM}}
  \bibinfo{volume}{56}(\bibinfo{number}{3}), \doi{10.1145/1516512.1516518}.

\bibitemdeclare{article}{BarenghiEtAl2015}
\bibitem{BarenghiEtAl2015}
\bibinfo{author}{A.~\surnamestart Barenghi\surnameend},
  \bibinfo{author}{S.~\surnamestart {Crespi Reghizzi}\surnameend},
  \bibinfo{author}{D.~\surnamestart Mandrioli\surnameend},
  \bibinfo{author}{F.~\surnamestart Panella\surnameend} \&
  \bibinfo{author}{M.~\surnamestart Pradella\surnameend}
  (\bibinfo{year}{2015}): \emph{\bibinfo{title}{Parallel parsing made
  practical}}.
\newblock {\sl \bibinfo{journal}{Sci. Comput. Program.}} \bibinfo{volume}{112},
  pp. \bibinfo{pages}{195--226}, \doi{10.1016/j.scico.2015.09.002}.

\bibitemdeclare{inproceedings}{BouajjaniEM97}
\bibitem{BouajjaniEM97}
\bibinfo{author}{A.~\surnamestart Bouajjani\surnameend},
  \bibinfo{author}{J.~\surnamestart Esparza\surnameend} \&
  \bibinfo{author}{O.~\surnamestart Maler\surnameend} (\bibinfo{year}{1997}):
  \emph{\bibinfo{title}{Reachability analysis of pushdown automata: Application
  to model-checking}}.
\newblock In: {\sl \bibinfo{booktitle}{CONCUR '97: Concurrency Theory}},
  \bibinfo{publisher}{Springer Berlin Heidelberg}, \bibinfo{address}{Berlin,
  Heidelberg}, pp. \bibinfo{pages}{135--150}, \doi{10.1007/3-540-63141-0_10}.

\bibitemdeclare{inproceedings}{Input-driven}
\bibitem{Input-driven}
\bibinfo{author}{B.~\surnamestart von Braunm\"uhl\surnameend} \&
  \bibinfo{author}{R.~\surnamestart Verbeek\surnameend} (\bibinfo{year}{1983}):
  \emph{\bibinfo{title}{Input-driven languages are recognized in log n space}}.
\newblock In: {\sl \bibinfo{booktitle}{Proc. of the Symp. on Fundamentals of
  Computation Theory, LNCS 158}}, \bibinfo{publisher}{Springer}, pp.
  \bibinfo{pages}{40--51}, \doi{10.1007/3-540-12689-9_92}.

\bibitemdeclare{article}{bib:Buchi1960a}
\bibitem{bib:Buchi1960a}
\bibinfo{author}{J.~R. \surnamestart B\"{u}chi\surnameend}
  (\bibinfo{year}{1960}): \emph{\bibinfo{title}{{Weak Second-Order Arithmetic
  and Finite Automata}}}.
\newblock {\sl \bibinfo{journal}{Mathematical Logic Quarterly}}
  \bibinfo{volume}{6}(\bibinfo{number}{1-6}), pp. \bibinfo{pages}{66--92},
  \doi{10.1002/malq.19600060105}.

\bibitemdeclare{incollection}{BurkartSteffen1992a}
\bibitem{BurkartSteffen1992a}
\bibinfo{author}{O.~\surnamestart Burkart\surnameend} \&
  \bibinfo{author}{B.~\surnamestart Steffen\surnameend} (\bibinfo{year}{1992}):
  \emph{\bibinfo{title}{Model checking for context-free processes}}.
\newblock In: {\sl \bibinfo{booktitle}{CONCUR '92}}, {\sl
  \bibinfo{series}{LNCS}} \bibinfo{volume}{630}, \bibinfo{publisher}{Springer
  Berlin Heidelberg}, pp. \bibinfo{pages}{123--137}, \doi{10.1007/BFb0084787}.

\bibitemdeclare{article}{CrespiMandrioli12}
\bibitem{CrespiMandrioli12}
\bibinfo{author}{S.~\surnamestart {Crespi Reghizzi}\surnameend} \&
  \bibinfo{author}{D.~\surnamestart Mandrioli\surnameend}
  (\bibinfo{year}{2012}): \emph{\bibinfo{title}{{Operator Precedence and the
  Visibly Pushdown Property}}}.
\newblock {\sl \bibinfo{journal}{JCSS}}
  \bibinfo{volume}{78}(\bibinfo{number}{6}), pp. \bibinfo{pages}{1837--1867},
  \doi{10.1016/j.jcss.2011.12.006}.

\bibitemdeclare{article}{Crespi-ReghizziMM1978}
\bibitem{Crespi-ReghizziMM1978}
\bibinfo{author}{S.~\surnamestart {Crespi Reghizzi}\surnameend},
  \bibinfo{author}{D.~\surnamestart Mandrioli\surnameend} \&
  \bibinfo{author}{D.~F. \surnamestart Martin\surnameend}
  (\bibinfo{year}{1978}): \emph{\bibinfo{title}{{Algebraic Properties of
  Operator Precedence Languages}}}.
\newblock {\sl \bibinfo{journal}{Information and Control}}
  \bibinfo{volume}{37}(\bibinfo{number}{2}), pp. \bibinfo{pages}{115--133},
  \doi{10.1016/S0019-9958(78)90474-6}.

\bibitemdeclare{incollection}{Emerson90}
\bibitem{Emerson90}
\bibinfo{author}{E.~A. \surnamestart Emerson\surnameend}
  (\bibinfo{year}{1990}): \emph{\bibinfo{title}{Temporal and Modal Logic}}.
\newblock In: {\sl \bibinfo{booktitle}{Handbook of Theoretical Computer
  Science, Volume {B:} Formal Models and Sematics {(B)}}},
  \bibinfo{publisher}{Elsevier}, pp. \bibinfo{pages}{995--1072},
  \doi{10.1016/B978-0-444-88074-1.50021-4}.

\bibitemdeclare{article}{Floyd1963}
\bibitem{Floyd1963}
\bibinfo{author}{R.~W. \surnamestart Floyd\surnameend} (\bibinfo{year}{1963}):
  \emph{\bibinfo{title}{{Syntactic Analysis and Operator Precedence}}}.
\newblock {\sl \bibinfo{journal}{JACM}}
  \bibinfo{volume}{10}(\bibinfo{number}{3}), pp. \bibinfo{pages}{316--333},
  \doi{10.1145/321172.321179}.

\bibitemdeclare{article}{DBLP:FrickG04}
\bibitem{DBLP:FrickG04}
\bibinfo{author}{M.~\surnamestart Frick\surnameend} \&
  \bibinfo{author}{M.~\surnamestart Grohe\surnameend} (\bibinfo{year}{2004}):
  \emph{\bibinfo{title}{The complexity of first-order and monadic second-order
  logic revisited}}.
\newblock {\sl \bibinfo{journal}{Ann. Pure Appl. Logic}}
  \bibinfo{volume}{130}(\bibinfo{number}{1-3}), pp. \bibinfo{pages}{3--31},
  \doi{10.1016/j.apal.2004.01.007}.

\bibitemdeclare{book}{GruneJacobs:08}
\bibitem{GruneJacobs:08}
\bibinfo{author}{D.~\surnamestart Grune\surnameend} \& \bibinfo{author}{C.~J.
  \surnamestart Jacobs\surnameend} (\bibinfo{year}{2008}):
  \emph{\bibinfo{title}{Parsing techniques: a practical guide}}.
\newblock \bibinfo{publisher}{Springer}, \bibinfo{address}{New York},
  \doi{10.1007/978-0-387-68954-8}.

\bibitemdeclare{inproceedings}{Lautemann94}
\bibitem{Lautemann94}
\bibinfo{author}{C.~\surnamestart Lautemann\surnameend},
  \bibinfo{author}{T.~\surnamestart Schwentick\surnameend} \&
  \bibinfo{author}{D.~\surnamestart Th{\'{e}}rien\surnameend}
  (\bibinfo{year}{1994}): \emph{\bibinfo{title}{Logics For Context-Free
  Languages}}.
\newblock In: {\sl \bibinfo{booktitle}{Computer Science Logic, 8th
  International Workshop, {CSL} '94}}, pp. \bibinfo{pages}{205--216},
  \doi{10.1007/BFb0022257}.

\bibitemdeclare{article}{LonatiEtAl2015}
\bibitem{LonatiEtAl2015}
\bibinfo{author}{V.~\surnamestart Lonati\surnameend},
  \bibinfo{author}{D.~\surnamestart Mandrioli\surnameend},
  \bibinfo{author}{F.~\surnamestart Panella\surnameend} \&
  \bibinfo{author}{M.~\surnamestart Pradella\surnameend}
  (\bibinfo{year}{2015}): \emph{\bibinfo{title}{Operator Precedence Languages:
  Their Automata-Theoretic and Logic Characterization}}.
\newblock {\sl \bibinfo{journal}{{SIAM} J. Comput.}}
  \bibinfo{volume}{44}(\bibinfo{number}{4}), pp. \bibinfo{pages}{1026--1088},
  \doi{10.1137/140978818}.

\bibitemdeclare{article}{MP18}
\bibitem{MP18}
\bibinfo{author}{D.~\surnamestart Mandrioli\surnameend} \&
  \bibinfo{author}{M.~\surnamestart Pradella\surnameend}
  (\bibinfo{year}{2018}): \emph{\bibinfo{title}{Generalizing input-driven
  languages: Theoretical and practical benefits}}.
\newblock {\sl \bibinfo{journal}{Computer Science Review}}
  \bibinfo{volume}{27}, pp. \bibinfo{pages}{61--87},
  \doi{10.1016/j.cosrev.2017.12.001}.

\bibitemdeclare{article}{McNaughton67}
\bibitem{McNaughton67}
\bibinfo{author}{R.~\surnamestart McNaughton\surnameend}
  (\bibinfo{year}{1967}): \emph{\bibinfo{title}{{Parenthesis Grammars}}}.
\newblock {\sl \bibinfo{journal}{JACM}}
  \bibinfo{volume}{14}(\bibinfo{number}{3}), pp. \bibinfo{pages}{490--500},
  \doi{10.1145/321406.321411}.

\bibitemdeclare{book}{McNaughtonPapert71}
\bibitem{McNaughtonPapert71}
\bibinfo{author}{R.~\surnamestart McNaughton\surnameend} \&
  \bibinfo{author}{S.~\surnamestart Papert\surnameend} (\bibinfo{year}{1971}):
  \emph{\bibinfo{title}{{Counter-free Automata}}}.
\newblock \bibinfo{publisher}{{MIT} {P}ress, {C}ambridge, {USA}}.

\bibitemdeclare{article}{Tha67}
\bibitem{Tha67}
\bibinfo{author}{J.~\surnamestart Thatcher\surnameend} (\bibinfo{year}{1967}):
  \emph{\bibinfo{title}{Characterizing derivation trees of context-free
  grammars through a generalization of finite automata theory}}.
\newblock {\sl \bibinfo{journal}{Journ. of Comp. and Syst.Sc.}}
  \bibinfo{volume}{1}, pp. \bibinfo{pages}{317--322},
  \doi{10.1016/S0022-0000(67)80022-9}.

\bibitemdeclare{article}{Walukiewicz2001}
\bibitem{Walukiewicz2001}
\bibinfo{author}{I.~\surnamestart Walukiewicz\surnameend}
  (\bibinfo{year}{2001}): \emph{\bibinfo{title}{Pushdown Processes: Games and
  Model-Checking}}.
\newblock {\sl \bibinfo{journal}{Information and Computation}}
  \bibinfo{volume}{164}(\bibinfo{number}{2}), pp. \bibinfo{pages}{234--263},
  \doi{10.1006/inco.2000.2894}.

\bibitemdeclare{inproceedings}{WolperVardiSistla1983}
\bibitem{WolperVardiSistla1983}
\bibinfo{author}{P.~\surnamestart Wolper\surnameend}, \bibinfo{author}{M.~Y.
  \surnamestart Vardi\surnameend} \& \bibinfo{author}{A.~P. \surnamestart
  Sistla\surnameend} (\bibinfo{year}{1983}): \emph{\bibinfo{title}{Reasoning
  about infinite computation paths}}.
\newblock In: {\sl \bibinfo{booktitle}{24th Annual Symposium on Foundations of
  Computer Science {SFCS}}}, pp. \bibinfo{pages}{185--194},
  \doi{10.1109/SFCS.1983.51}.

\end{thebibliography}

\end{document}